\definecolor{darkred}{HTML}{8b0000}
\definecolor{darkblue}{HTML}{00008b}
\definecolor{darkgreen}{HTML}{006400}
\newtheorem{theorem}{Theorem}
\newtheorem{prop}[theorem]{Proposition}
\def\bsq#1{\lq{#1}\rq}
\def\N{\mathcal{N}}
\def\T{\mathcal{T}}
\def\R{\mathbb{R}}
\def\he{\hat{\varepsilon}}
\def\est{\hat{\theta}}
\def\E{\mathbb{E}}
\def\D{\mathcal{F}}
\def\IG{\mathrm{IG}}
\def\prior{\mathrm{Priors}_{t=0}}
\DeclareMathOperator*{\argmin}{arg\,min}
\newcommand{\dJdl}[1]{\frac{ \partial J_{#1} }{ \partial \lambda} }
\newcommand{\dTdl}[1]{\frac{ \partial \est_{#1}}{ \partial \lambda} }
\newcommand{\dCdl}[1]{ \frac{ \partial  C_{#1}}{ \partial \lambda} }
\newcommand{\dSdl}[1]{ \frac{ \partial S_{#1} }{ \partial \lambda} }
\newcommand{\dQdl}[1]{ \frac{ \partial Q_{#1} }{ \partial \lambda} }
\newcommand{\dAdl}[1]{ \frac{ \partial  A_{#1}}{ \partial \lambda} }
\newcommand{\blind}{0}
\begin{document}

\def\spacingset#1{\renewcommand{\baselinestretch}%
{#1}\small\normalsize} \spacingset{1}


\if0\blind
{
  \title{\bf Adaptive Dynamic Model Averaging with an Application to House Price Forecasting}
  \author{Alisa Yusupova
  \hspace{.2cm}\\ Department of Management Science, Lancaster University\\
     \\
    Nicos G. Pavlidis \\
    Department of Management Science, Lancaster University\\
     \\
    Efthymios G. Pavlidis \\
    Department of Economics, Lancaster University }
  \maketitle
} \fi

\if1\blind
{
  \bigskip
  \bigskip
  \bigskip
  \begin{center}
    {\LARGE\bf Adaptive Dynamic Model Averaging with an Application to House Price Forecasting}
\end{center}
  \medskip
} \fi

\bigskip

\begin{abstract}

Dynamic model averaging (DMA) combines the forecasts of a large number of
dynamic linear models (DLMs) to predict the future value of a time series. The
performance of DMA critically depends on the appropriate choice of two
forgetting factors. The first of these controls the speed of adaptation of the
coefficient vector of each DLM, while the second enables time variation in the
model averaging stage. In this paper we develop a novel, adaptive dynamic model
averaging (ADMA) methodology. The proposed methodology employs a stochastic
optimisation algorithm that sequentially updates the forgetting factor of each
DLM, and uses a state-of-the-art non-parametric model combination algorithm
from the prediction with expert advice literature, which offers finite-time
performance guarantees. An empirical application to quarterly UK house price
data suggests that ADMA produces more accurate forecasts than the benchmark
autoregressive model, as well as competing DMA specifications.

\end{abstract}

\noindent%
{\it Keywords:} Adaptive forgetting; Stochastic optimisation; Prediction with Expert Advice; Dynamic linear model; Housing market
\vfill

\newpage
\spacingset{1.45} 

\section{Introduction}\label{sec:intro}

A growing empirical literature 
provides strong evidence in favour of  structural instability in many macroeconomic relations
\citep{StockW1996,KoopP2007,NgW2013}.
Structural instability is crucial because, if left
unaccounted for, it can have detrimental consequences for statistical inference
and forecasting \citep{ClementsH1998,PesaranPT2006,GiacominiR2009,rossi2013advances}.

Dynamic Model Averaging (DMA) is an econometric methodology that can
accommodate time variations in both model parameters and model
specification~\citep{RafteryDMA2010}. 
This methodology has gained increasing popularity  in recent years for predicting various
economic variables, such as
inflation~\citep{KoopKorobilis_2012,CataniaNonejad_2018}, carbon
prices~\citep{KoopTole_2013}, exchange rates~\citep{ByrneKR2018}, equity
returns~\citep{DanglH2012}, and property price growth~\citep{BorkMoller_2015}.
DMA creates a Dynamic Linear Model (DLM) for every possible subset of  predictors and combines the forecasts of these models using weights that adapt over time~\citep{RafteryDMA2010}. To adapt to changes
in the data generating distribution, DMA involves two parameters, called
\textit{forgetting factors}.
The first forgetting factor is part of the DLM
formulation, while the second is involved in the model averaging phase. 
These parameters allow a continuous trade-off between estimation
in a static environment, and re-initialising the estimation process by
discarding all past information, which is appropriate
after a structural break.
It is therefore not surprising that the choice of forgetting factors
is critical to the forecast performance of DMA.

Initial work by \citet{RafteryDMA2010} and \citet{KoopKorobilis_2012}
considered both forgetting factors to be user-defined and constant.  However,
in general, the type of structural change in economic relationships is unknown
and may vary considerably over time \citep{ChenH2012}. Structural breaks due to
changes in regulatory conditions, in the behaviour of consumers and firms or in
the preferences of policy makers constitute prominent examples where periods
during which the data generating process is static are interrupted by episodes
of abrupt change \citep{PesaranPT2006,KAPETANIOS2010}. In this
setting, and more generally whenever the speed or type of change in the data
generating process is not constant, there does not exist a single choice of
forgetting factors that is optimal for the entire length of a time series.

The DMA formulation of~\cite{DanglH2012}, adopted by a number of more recent
works~\citep{CataniaNonejad_2018,ByrneKR2018}
involves no forgetting in the model averaging stage, and treats the choice of
the DLM forgetting factor as an additional dimension of model uncertainty.
In particular, the user specifies a grid of forgetting factor values, and the
posterior distribution of this parameter is updated at every time step by
marginalising over all DLM specifications.
The Bayesian approach of~\cite{DanglH2012} effectively assumes that the
appropriate choice of the DLM forgetting factor is constant over time, and
identical across models.
\cite{McCormick2012} propose a similar approach that involves a grid of values
for both forgetting factors. They propose to use the forgetting factors that maximise the
predictive likelihood for each DLM specification to avoid the computational
cost of Bayesian updating. 

In this paper, we develop an adaptive dynamic model averaging (ADMA)
methodology that consists of two components. The first involves the use of
stochastic optimisation to identify the forgetting factor that minimises the
expected one-step-ahead squared forecast error of each DLM. This leads to a fully
online and data-driven algorithm which we call Adaptive Forgetting DLM
(AF-DLM).  As we show in the experimental results section AF-DLM is effective
under different types of change in the data generating process, including cases
in which the speed or type of change is variable over time. AF-DLM is also
computationally less expensive than previous approaches since it does not
involve a grid of forgetting factor values.

The second component of ADMA deals with model averaging. We show that the speed
with which DMA weights respond to more recent observations is determined not
only by the choice of the corresponding forgetting factor, but also by the
mechanism that prevents underflow (weights becoming equal to machine zero).
Therefore any approach to control the adaptability of DMA weights by tuning
only the second forgetting factor is inherently limited. 
We propose to replace the current model averaging approach with the ConfHedge
model combination algorithm from the field of machine learning known as
prediction with expert advice~\citep{VyuginTrunov_2019}. In addition to being
parameter-free, ConfHedge is currently the only model combination algorithm
whose one-step-ahead squared forecast error over a finite number of time steps
is within a known bound of the one-step-ahead squared forecast error of the
optimal sequence of forecasting models.

To assess the effectiveness of the proposed methodology, we provide an in-depth
empirical evaluation of ADMA on the task of forecasting UK house prices. The
motivation for this application is twofold. First, although the latest
boom-bust episode in real estate markets and its decisive role in the Great
Recession has generated a vast interest in the behavior of international
housing markets, the academic literature on house price forecastability is
relatively small (especially when compared to the literature on other assets
such as stock prices and exchange rates), and  mainly concentrates on the US
market \citep[see, e.g.,][]{RapachStrauss_2009,GHYSELS20135,BorkMoller_2015}.
In the UK, similarly to the US, housing activities account for a large fraction
of  GDP and of households' expenditures, real estate property comprises the
largest component of private wealth (excluding private pensions), and mortgage
debt constitutes the main liability of households~\citep{ONS_2018}. Thus,
accurate forecasts of UK real estate prices are crucial for private investors
and policy makers. Second, recent empirical evidence suggests that the
relationship between real estate valuations and conditioning macro and
financial variables displays time-varying patterns
\citep{Aizenman2014,Anundsen2015,Paul2018}. This makes housing markets an ideal
setting for the application of dynamic econometric models.

In summary, the results of our empirical application suggest that ADMA offers
significant forecasting gains relative to a linear autoregressive (AR)
benchmark, as well as a battery of competing dynamic and static forecasting
models. They also indicate that the \textit{best} house predictors
substantially differ over time and across regions. In-sample stability tests
also support the conclusion that the data generating process of regional UK
house prices is subject to structural instability.

The remaining paper is organised as follows. Section~\ref{sec:meth} presents
the ADMA methodology.
In Section~\ref{sec:Exp} we assess the proposed AF-DLM on simulated time series
exhibiting different types of dynamics. Section~\ref{sec:House} is devoted to
the comparative evaluation of ADMA against alternative forecasting models on
the task of predicting UK regional house prices. The paper ends with concluding
remarks in Section~\ref{sec:Conc}.

\section{Methodology}\label{sec:meth}

This section is divided into three parts. The first part provides a brief
outline of the DMA methodology. The second part deals with the development of
the stochastic optimisation approach to sequentially adapt the forgetting
factor in a single DLM. The last part discusses limitations of existing model
combination methods and presents the ConfHedge algorithm.

\subsection{Dynamic Model Averaging}

For a set of~$D$ covariates, DMA creates a DLM for each possible subset (excluding the empty set), giving rise to~$K=2^D-1$ models, $M_1,\ldots,M_K$. Model $M_k$ is defined by, 
\begin{align}
\theta^{(k)}_{t+1} & = \theta^{(k)}_{t} + \omega^{(k)}_{t+1}, & \omega^{(k)}_{t+1} \sim \N\left(0, W^{(k)}_{t+1} \right), \label{eq:state}\\
y_{t+1} & = x_{t+1}^{(k)\top} \theta^{(k)}_{t+1} + \varepsilon^{(k)}_{t+1}, & \varepsilon^{(k)}_{t+1} \sim \N \left(0, V^{(k)}_{t+1}\right), \label{eq:meas}
\end{align}
where $\theta^{(k)}_t \in \R^d$ denotes the coefficient vector and $x^{(k)}_t \in \R^d$ is the covariate vector (including a constant) of $M_k$ at time~$t$. Eq.~(\ref{eq:state}), known as the state-transition equation, determines the dynamics of the unobserved coefficient vector. Eq.~(\ref{eq:meas}), called the observation or measurement equation, links the response, $y_{t+1}$, to the coefficients and the covariates.
The errors, or noise terms, $\omega^{(k)}_{t+1}$ and $\varepsilon^{(k)}_{t+1}$ in
Eqs.~(\ref{eq:state}) and~(\ref{eq:meas}), respectively, are assumed to be
independent normally distributed random variables. The state transition
covariance matrix, $W^{(k)}_{t+1}$, and the observational variance, $V^{(k)}_{t+1}$, are typically unknown.

The DMA forecast, $\hat{y}_{t+1}$, is obtained through
a convex combination of the forecasts of the $K$ DLMs,
\begin{equation}\label{eq:DMAver}
\hat{y}_{t+1} = \sum_{k=1}^K p(M_k| \D^{(k)}_{t}) \, \hat{y}^{(k)}_{t+1},
\end{equation}
where $\hat{y}^{(k)}_{t+1}$ is the prediction by model $M_k$, and
$p(M_k|\D^{(k)}_{t})$ is the probability (weight) assigned to $M_k$ conditional on the information 
available at time $t$, $\D^{(k)}_{t}$.
The information set is defined as,
$\D^{(k)}_{t}= \{y_{t},\ldots,y_1,
x^{(k)}_{t+1},x^{(k)}_{t},\ldots,x^{(k)}_1,\prior\}$, and
contains the choice of priors, the realisations of the covariate vector and of the response up to time $t$, as well as the covariate vector at time $t+1$,
$x^{(k)}_{t+1}$, required to predict $y_{t+1}$.

\subsection{Dynamic Linear Model with Forgetting}\label{ssec:dlm}

Because we consider a single DLM with forgetting throughout this subsection, we drop the superscript $(k)$ to simplify notation. We adopt the DLM formulation proposed of \cite{DanglH2012}, in which the prior distribution for the coefficients vector, $\theta_0$, is Gaussian;
the measurement variance, $V_t=V$, is constant over time; and an inverse-gamma distributed prior
for~$V$ is used. Consequently,
\begin{align}
V | \D_0 & \sim \IG \left(\frac{1}{2}, \frac{1}{2} S_0\right),\\
\theta_0 | \D_0, V & \sim \N \left(0, g I \right), \label{eq:C0}
\end{align}
which enables a conjugate Bayesian analysis.
The model specification and the assumptions about the priors imply that, 
\begin{align}
V | \D_{t} & \sim \IG \left(\frac{n_{t}}{2}, \frac{n_{t}}{2}
S_{t}\right),
\end{align}
where $S_{t}$ is the mean of the estimate of~$V$ at time~$t$, and $n_{t}$ stands for the associated degrees of freedom. Conditional on $V$ the posterior distribution of the coefficient vector is Gaussian, 
\begin{align}
\theta_{t} | \D_{t}, V & \sim \N \left( \est_{t}, \frac{V}{S_{t}}
C_{t} \right),
\end{align} 
where $\est_{t}$ is the point estimate of $\theta_{t}$,
and~$C_{t}$ is the estimator for the conditional covariance matrix for
$\theta_{t}$.
Integrating out $V$, the posterior becomes a multivariate $t$-distribution,
\begin{align}
\theta_{t} |\D_{t} & \sim \T_{n_{t}} \left( \est_{t}, C_{t} \right).
\end{align}
The prior distribution for the coefficient vector at the next time-step is,
\begin{align}\label{eq:theta1}
\theta_{t+1} | \D_{t} \sim \T_{n_{t}} \left( \est_{t}, C_{t} + W_{t+1} \right).
\end{align}
For a generic DLM, $W_{t+1}$ can be sequentially estimated, but the associated computational cost is prohibitive for a method like DMA which uses $2^D-1$ DLMs. The distinctive feature of the approach of \cite{RafteryDMA2010} is that it avoids altogether the estimation of $W_{t+1}$ by setting,
\begin{equation}\label{eq:Wt}
W_{t+1} = \frac{1 - \lambda}{\lambda}  C_{t},
\end{equation}
where $\lambda \in (0,1]$ is the {\em forgetting factor} parameter.
This simplifies Eq.~\eqref{eq:theta1} to,
\begin{align}\label{eq:theta2}
\theta_{t+1} | \D_{t} \sim \T_{n_{t}} \left( \est_{t}, \lambda^{-1} C_{t}
\right).
\end{align}
The forgetting factor in Eq.~\eqref{eq:Wt} allows a continuous range between estimation in a static environment and completely discarding all past data which is appropriate in response to a structural break.
Specifically, setting $\lambda=1$ corresponds to $W_t=\mathbf{0}$ and thus
reduces the DLM to a static linear model. On the other hand, as $\lambda$ tends to zero $W_t$ tends to infinity. This inflates the uncertainty about $\theta_{t+1}$ (see Eq.~\eqref{eq:theta2}), which effectively re-initialises the estimation process.
Intermediate values of $\lambda$ correspond to a random walk process 
for $\theta$ in which
periods of high estimation error in the coefficients coincide
with periods of high variability, and vice versa.

The prior distribution (predictive density) of $y_{t+1}$ conditional on $\D_{t}$
is,
\begin{align}
y_{t+1}|\D_{t} & \sim \mathcal{T}_{n_{t}} \left( x_{t+1}^\top \est_{t}, \;
Q_{t+1} \right), \label{eq:pdf}\\
Q_{t+1} & = \lambda^{-1} x_{t+1}^\top C_{t} x_{t+1} + S_{t}. \label{eq:Q}
\end{align}
Once the actual value $y_{t+1}$ is observed, we can compute the forecast error, 
\begin{equation}
\he_{t+1} = y_{t+1} - x_{t+1}^\top \est_{t},
\end{equation}
and update the prior distributions of the coefficient vector and the measurement variance through Eqs~\eqref{eq:Update1}--\eqref{eq:Koop9}.
The degrees of freedom and the estimator of the observational variance are updated according to,
\begin{align}\label{eq:Update1}
n_{t+1}  & = n_{t} +1,  \\
S_{t+1} & = S_{t} + \frac{S_{t}}{n_{t+1}} \left(\frac{\he^2_{t+1}}{Q_{t+1}} - 1 \right).
\label{eq:PW_CN}
\end{align}
The point estimate and the estimator of the covariance matrix of $\theta_{t+1}$ are obtained by,
\begin{align}
\est_{t+1} & = \est_{t} + A_{t+1} \he_{t+1},\label{eq:Koop8} \\
C_{t+1} & = \lambda^{-1} C_{t} + A_{t+1} A_{t+1}^\top Q_{t+1},\label{eq:Koop9}
\end{align}
where
\begin{equation}\label{eq:KG}
A_{t+1} =  \frac{\lambda^{-1}C_{t} x_{t+1}}{Q_{t+1}},
\end{equation}
is a coefficient vector (known as the Kalman gain), which
measures the information content of $y_{t+1}$ in relation
to the precision of the estimated regression coefficient.

\subsubsection*{Adaptive Forgetting}

We are now in a position to describe the adaptive forgetting DLM (AF-DLM).
The approach we propose is motivated by the adaptive recursive least squares algorithm~\citep{Haykin_2002}.
The central idea underlying this approach is that the optimal forgetting factor
minimises the expectation of the one-step-ahead squared forecast error,
\begin{equation}\label{eq:Exp}
\lambda^\star_{t} = \argmin_{\lambda \in (0,1]} \E \left[ \frac{1}{2} \left(y_{t+1}
- x_{t+1}^\top \est_t \right)^2 \right].
\end{equation}
Since the expectation in the above equation is not available in
analytical form it is not feasible to directly optimise it.
However, the one-step-ahead squared forecast error,
\begin{equation}\label{eq:Jt}
J_{t+1} =  \frac{1}{2} \left( y_{t+1} - x_{t+1}^\top \est_t  \right)^2,
\end{equation}
is an unbiased estimator of the expectation in Eq.~\eqref{eq:Exp}.
Stochastic optimisation algorithms are designed to
optimise the expected value of a function which depends on a set of
random variables.
The most widely used stochastic optimisation algorithms involve first-order
information and are hence variations of stochastic gradient descent. The term
stochastic gradient in this context refers to the fact that the gradient
of $J_{t+1}$ (which is an unbiased estimate of the gradient of $\E_{X,Y}[J_{t+1}]$) is used.

To use stochastic gradient descent to minimise the expected one-step-ahead
squared forecast error we need an expression for the derivative of
$J_{t+1}$
with respect to~$\lambda$.
Applying the chain rule produces,
\begin{align}\label{eq:ChainRule}
\dJdl{t+1} & = \frac{\partial J_{t+1}}{\partial \est_t} \, \dTdl{t}, \nonumber
\\
& = - \he_{t+1} x_{t+1}^\top \, \dTdl{t}.
\end{align}
To obtain $\dTdl{t}$ we differentiate the update equation for $\est_{t}$,
Eq.~\eqref{eq:Koop8}, with respect to $\lambda$.
Such an approach is utilised in a number of adaptive linear
filters~\citep{Haykin_2002}, in online neural network
training~\citep{Almeida1999,Schraudolph99b,Baydin2018}, as well as in streaming
data classifiers~\citep{PavlidisTAH2011,Anagn12}.
The outcome of this differentiation is,
\begin{align}
\dTdl{t} & = \nabla_{\lambda} \left\{  \est_{t-1} + A_{t} \he_{t} \right\}
\nonumber \\
& =  (I - A_{t-1}x_t^\top) \dTdl{t-1} + \frac{\he_t}{S_{t-1}} \left(\dCdl{t}
x_{t} - A_{t-1} \dSdl{t-1} \right).
\end{align}
The derivation of all the necessary quantities to estimate $\dTdl{t}$ is lengthy and is hence provided in Appendix~\ref{app:Deriv}.

Using the gradient $\dJdl{t+1}$, AF-DLM updates the value of $\lambda$ at each time-step through
the highly influential adaptive moment estimation (ADAM) stochastic gradient
descent algorithm~\citep{Diederik2015}.
ADAM uses adaptive estimates of the first two moments of the stochastic
gradient to tune the crucial step-size parameter of the gradient
descent algorithm.
ADAM has been shown to be effective in a wide range of challenging optimisation
problems; it is straightforward to implement; and is capable of handling
non-stationary objective functions~\citep{Baydin2018}. The latter aspect is
particularly important for our purposes 
since the optimal value of~$\lambda$ is itself time-varying for time series
that exhibit structural breaks, or more generally non-constant dynamics.

\subsection{Model Averaging}\label{ssec:ma}

In this section we discuss the model averaging component of DMA.
Recall that according to Eq.~\eqref{eq:DMAver} the DMA forecast, $\hat{y}_{t+1}$, 
is a convex combination of the forecasts produced by the~$K$ DLMs.
Let $p(M_k|\D_t)$ denote the probability (weight)
of model $M_k$ conditional 
on the information set $\D_t$. The following definition of $p(M_k|\D_t)$ accommodates
all DMA variants,
\begin{align}
p(M_k|\D_t) & = \frac{ p(y_t| M_k, \D_{t-1}) \left[ p(M_k | \D_{t-1})^\alpha + c \right] }
	{\sum_{m=1}^K p(y_t| M_m, \D_{t-1}) \left[ p(M_m | \D_{t-1})^\alpha + c \right] }. \label{eq:KK_16}
\end{align}
Ignoring the constant $c$ and expanding Eq.~\eqref{eq:KK_16} gives,
\begin{align*}
p(M_k|\D_t) & = \frac{ p(M_k|\D_0)^{\alpha^t} \prod_{j=1}^t p\left(y_j|M_k, \D_{j-1}\right)^{\alpha^{t-j} }}
	{ \sum_{m=1}^K p(M_m|\D_0)^{\alpha^t} \prod_{j=1}^t p \left(y_j|M_m, \D_{j-1} \right)^{\alpha^{t-j}} }, \\
	& \propto \prod_{j=1}^t p(y_j|M_k, \D_{j-1}) ^{\alpha^{t-i}},
\end{align*}
where typically, $p(M_k|\D_0)= 1/K$.
The last equation 
highlights that $\alpha$ controls the rate at which past information is discounted.

The recommendations in the literature are to set the forgetting factor $\alpha$ either equal to one
(which corresponds to Bayesian Model Averaging), or very close to unity.  In
particular, \cite{DanglH2012} and \cite{ByrneKR2018} use $\alpha=1$, while in
the eDMA {\tt R} package of \cite{CataniaNonejad_2018} the default is
$\alpha=0.99$.  \cite{RafteryDMA2010} and \cite{KoopKorobilis_2012} recommend
using $\lambda=\alpha$.  Specifically, \cite{RafteryDMA2010} recommend
$\lambda=\alpha=0.99$, while \cite{KoopKorobilis_2012} consider two values
$\lambda,\alpha \in \{0.95, 0.99\}$.
Only \cite{RafteryDMA2010} mention the small positive constant, $c$, in
Eq.~\eqref{eq:KK_16} which is included to avoid the weight of any model
becoming equal to machine zero.
Such a constant is present in the DMA implementation
of~\cite{KoopKorobilis_2012}, but not in the {\tt R} package
eDMA~\citep{CataniaNonejad_2018}.
The example we discuss next aims to illustrate that the speed with which model
probabilities adapt in response to changes in the optimal model specification
depends critically on~$c$.

For simplicity we consider a problem involving only three models whose
coefficients at every time-step are known.
The observational variance of each model is also known and constant over time.
The response at each time-step is generated from one of the three models, but
the identity of this model is unknown.
In this setting no generality is lost if we assume that for every model, $y_t^{(k)} |
\theta_t^{(k)}, x_t \sim \N(\mu_k, V_k)$, for all $t=1,\ldots,T$, and $k=1,2,3$.
Thus we assume 
$y^{(1)}|\theta_t^{(1)},x_t \sim \N(1, 4)$, 
$y^{(2)}|\theta_t^{(2)},x_t \sim \N(0, 0.64)$, 
and $y^{(3)}|\theta_t^{(3)},x_t \sim \N(-2.5, 0.09)$.
We construct a time series of length~$T=300$ 
by sampling $y_t$ from $M_1$ for $t=1,\ldots,100$, from $M_2$ for $t=101,\ldots,200$,
and from $M_3$ for $t=201,\ldots,300$.
The three models are assigned equal prior probabilities.

\begin{figure}[!h]
\begin{center}
\includegraphics[width=0.9\linewidth]{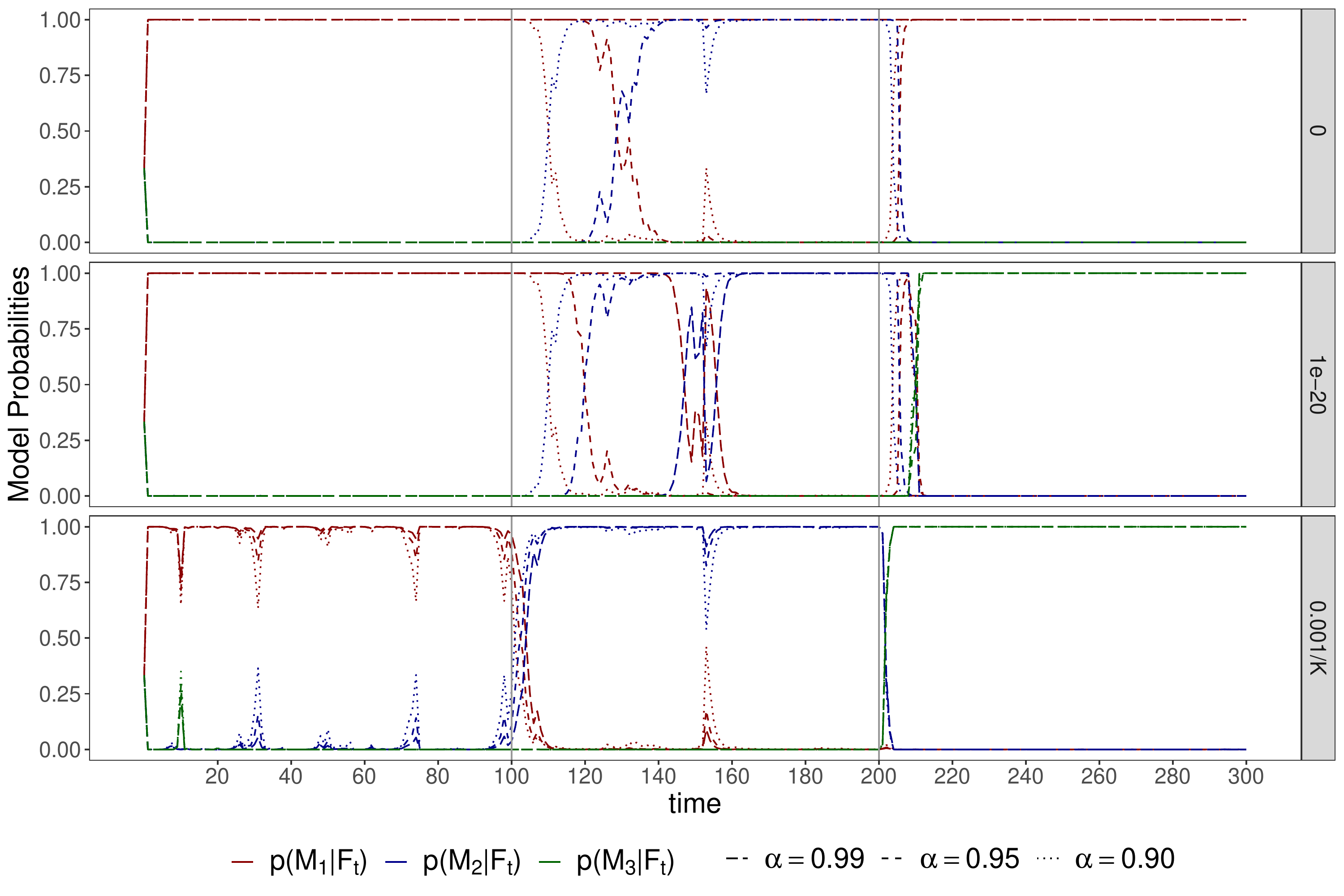}
\end{center}
\caption{Probabilities determined through Eq.~\eqref{eq:KK_16} for 
forgetting factor values, $\alpha \in \{0.99, 0.95,0.9\}$, and values of
~$c \in \{0,10^{-20},0.001/K\}$.}\label{fig:Weights1}
\end{figure}

Figure~\ref{fig:Weights1} depicts the evolution of the weights 
of the three models using red, blue and green colour respectively.
The three subfigures correspond to 
$c = 0,10^{-20},10^{-3}/3$, as recommended
by~\cite{CataniaNonejad_2018,KoopKorobilis_2012} and \cite{RafteryDMA2010},
respectively.
Within each subfigure a different type of dashed line is used to distinguish
between the three values of the forgetting factor, $\alpha \in \{0.99, 0.95,0.9\}$.
Note that $\alpha=0.9$ is much lower than any recommendation in the literature,
and is only included to explore the extent to which this forgetting factor enables
adaptation.
Finally, the two grey vertical lines depict the timing of the change points at
$t=100, 200$.

The top subfigure corresponds to~$c=0$. For this setting using $\alpha=0.99$
causes the probability of $M_1$ to be effectively equal to one throughout
the simulation despite the two change points. This occurs because the weights
of models $M_2$ and $M_3$ are equal to machine zero by time-step $t=100$.
A lower value of $\alpha$ allows the weights to adapt correctly to the first
change point, but the response is slow even when~$\alpha=0.9$.
However, by time-step $t=200$ the weight assigned to $M_3$ is equal
to zero for both $\alpha = 0.95,0.9$.
Therefore when $c=0$ not even very aggressive forgetting, $\alpha=0.9$, is
sufficient for Eq.~\eqref{eq:KK_16} to correctly identify the optimal model
after the second change point. Instead $M_1$ is assigned a weight of one.
Introducing a small constant, $c=10^{-20}$, (middle subfigure) enables the
weights to adapt correctly in response to both change points for all values
of $\alpha$. 
We see however that for~$\alpha=0.99$ approximately 50 time-steps are required
after the first change point for the weight assigned to $M_1$ to become
noticeably lower than one. 
Although the response to the second change point is faster, notice that for
$\alpha=0.95$ there is a short period immediately after the change point during
which the weight of $M_1$ (rather than $M_3$) increases abruptly.
The final subfigure corresponds to the case~$c=10^{-3}/3$. A larger constant
enables the weights to adapt much faster in response to both change points.
Notice for instance that for $c=10^{-3}/3$ and $\alpha=0.99$ model probabilities
adjust faster after the first change point, compared to the case when~$\alpha=0.9$ but
$c=10^{-20}$.
However, a larger value of $c$ also induces much higher variability during
periods in which the data generating process is static.

The above example demonstrates that the choice of~$c$ in Eq.~\eqref{eq:KK_16}
is at least as important as that of~$\alpha$.
Therefore any approach focused only on tuning $\alpha$ is not sufficient to
fully control the speed with which model probabilities in DMA adapt.
We propose to use the ConfHedge algorithm 
from the field of machine learning known as {\em prediction with expert
advice}~\citep{VyuginTrunov_2019}. As we discuss next ConfHedge has
very appealing theoretical properties and is parameter-free.

Prediction with expert advice studies the following online learning problem~\citep{CBianchiL2006}. At
time-step, $t+1$, each of the~$K$ forecasting models, called {\em experts}\/,
provides a forecast,~$\hat{y}^{(k)}_{t+1}$. An {\em aggregating algorithm}
predicts $\hat{y}_{t+1}$ through a convex combination of the experts'
forecasts. 
After observing $y_{t+1}$,
the weight of every expert is updated based on a measure of forecast error,
called {\em loss}\/. 
In a static environment, the goal is to design weight updates that guarantee
that the loss of the aggregating algorithm is never much larger than the cumulative loss
of the best expert, or the best convex combination of the losses of the
experts. 
In a dynamic environment the expert (or convex combination of experts) that
achieve the lowest loss may differ across segments of the time series, and
comparing against the best expert over the entire time series length 
can result in algorithms with poor forecast performance.
To address this issue consider a partition of the time series into at most $L+1$ segments,
$1<t_{(1)}<t_{(2)} < \cdots <t_{(L)}<T$, and allow the best expert (or convex
combination of experts) to differ across elements of this partition. The
best partition into at most $L+1$ segments is the one for which the optimal
sequence of experts (or convex combination of experts) achieves the lowest
cumulative loss.
The learning problem in this setting is considerably harder.
The ideal aggregating algorithm must achieve a loss that is as close as
possible to that of the sequence of experts (or convex combinations of experts)
that form the best partition of the time series into at most $L+1$ segments. 
Note that neither the maximum number of change points, $L$, nor the length of
each segment are known.

A number of algorithms have been proposed that achieve optimal upper
bounds for this problem, but these typically assume that the loss function is
uniformly bounded~\citep{HerbsterW1998}.
This assumption is not satisfied in our case since each DLM expert in ADMA includes a Gaussian error term.
ConfHedge is the first (and to the best of our
knowledge the only) method that upper bounds the loss of the aggregating
algorithm against an arbitrary sequence of experts (or convex combinations of
experts), when the loss function is unbounded~\citep{VyuginTrunov_2019}.

We next briefly describe the ConfHedge algorithm for our problem.
To distinguish from the DMA probabilities we denote as $w_{k,t+1}$ the weight
assigned by ConfHedge to model $M_k$ at the stage of predicting the response at
time $t+1$.
Initially the weights of all models are equal, $w_{1,t} = 1/K$.
At time-step $t$ the ConfHedge prediction $\hat{y}^{\text{CH}}_{t+1}$ is
obtained through,
\begin{align}
\hat{y}^{\text{CH}}_{t+1} = \sum_{k=1}^K \frac{w_{k,t+1}}{\sum_{m=1}^K
w_{m,t+1} } \hat{y}^{(k)}_{t+1}.
\end{align}
After observing $y_{t+1}$ the loss of every expert is estimated as the squared
forecast error,
\begin{align}\label{eq:lossL}
l_{k,t+1} = \frac{1}{2} \left(y_{t+1} - \hat{y}^{(k)}_{t+1} \right)^2,
\end{align}
and the loss of the aggregating algorithm is defined as,
\begin{align}\label{eq:lossH}
h_{t+1} = \sum_{k=1}^K w_{k,t+1} l_{k,t+1} = w_{t+1}^\top l_{t+1},
\end{align}
where $l_{t+1} = (l_{1,t+1},\ldots,l_{K,t+1})$.
The weights for time-step $t+2$ are updated according to,\footnote{We present
the update for the Fixed Share mixing scheme in~\cite{VyuginTrunov_2019}.}
\begin{align}\label{eq:VyuginW}
w^{\mu}_{k,t+1} & = \frac{ w_{k,t+1} e^{-\eta_{t+1} l_{k,t+1}  } }{
\sum_{m=1}^K w_{m,t+1} e^{-\eta_{t+1} l_{m,t+1} } }\\
w_{k,t+2} & = \frac{1}{(t+2)K} + \frac{t+1}{t+2} w^{\mu}_{k,t+1}.
\end{align}
At the first time-step, $\eta_1=\infty$, which implies $w^{\mu}_{k,1} = 1$ if
model $M_k$ achieves the minimum loss at time $t=1$, and zero
otherwise~\citep{Rooij2014}. In subsequent time-steps the step-size parameter
$\eta$ is updated according to,
\begin{align} \label{eq:VyuginPars}
m_{t+1} & = -\frac{1}{\eta_{t+1}} \log\left\{ \sum_{k=1}^K w_{k,t+1}
e^{-\eta_{t+1} l_{k,t+1} } \right\}, \\
\Delta_{t+1} & = \Delta_{t} + h_{t+1} - m_{t+1}, \quad \Delta_0=0,\\
\eta_{t+2} & = \max\{1,\log(K)\}/\Delta_{t+1}.
\end{align}
ConfHedge involves no user-defined parameters.
Eq.~\eqref{eq:VyuginW} shows that the algorithm explicitly uses a weighted
combination of two terms. The first term distributes a weight of $1/(t+2)$
equally among the~$K$ experts. Note that in the original DMA formulation, the
constant~$c$ in Eq.~\eqref{eq:KK_16} plays a similar role although in that case
it is not possible to control the proportion of the overall weight that is equally allocated
among all models.
The second term assigns a progressively larger proportion of the total weight
to experts that achieve lower loss.

Denote as $q_1,q_2,\ldots,q_T$ a sequence of convex combinations of experts,
where $q_t \in \left\{x \in \R^K_{+} \,|\, \sum_{k=1}^K x_{k} = 1 \right\}$.
The goal of the aggregating algorithm is to 
minimise the {\em shifting regret},
\begin{equation}\label{eq:regret}
R_T = \sum_{t=1}^T h_t - \sum_{t=1}^T q_{t}^\top l_{t},
\end{equation}
where $q_t$ changes at most $L$~times, at unknown time-points, $1<t_{(1)}<t_{(2)} < \cdots <t_{(L)}<T$.
\cite{VyuginTrunov_2019} prove a number of results that upper bound the shifting regret
of ConfHedge for finite~$T$. These bounds depend on the maximum number of change points,
$L$, as well as on the range of actual losses incurred by the individual forecasters.
The description of these results is beyond the scope of this paper. The following simple proposition
establishes that when the loss function is the one-step-ahead squared forecast 
error an upper bound on~$R_T$ 
translates to an upper bound on the mean squared forecast error.

\begin{prop}

Let $\{y_t\}_{t=1}^T$ be the observed time series, and $\{q_t^\star\}_{t=1}^T$
the sequence of weights that achieves the lowest cumulative squared forecast
error out of all sequences that contain at most~$L$ change points.
Consider the ConfHedge algorithm using the squared forecast error as loss function.
Let $\{ \hat{y}^{\mathrm{CH}}_t \}_{t=1}^T$ denote the ConfHedge predictions, and
$C$ the upper bound on the shifting regret established by~\cite{VyuginTrunov_2019}.
Then,
\[
\frac{1}{2} \sum_{t=1}^T \left(y_t - \hat{y}^{\mathrm{CH}}_t \right)^2 \leqslant C +
\sum_{t=1}^T (q_{t}^\star)^\top l_t.
\]

\end{prop}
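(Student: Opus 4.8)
The plan is to unpack the definitions and observe that this is essentially a one-line consequence of the shifting-regret bound. First I would recall from Eq.~\eqref{eq:regret} that for \emph{any} sequence of convex combinations $q_1,\ldots,q_T$ with at most $L$ change points, the shifting regret satisfies $R_T = \sum_{t=1}^T h_t - \sum_{t=1}^T q_t^\top l_t \leqslant C$, where $C$ is the bound proved by~\cite{VyuginTrunov_2019} and $h_t$ is the loss of ConfHedge defined in Eq.~\eqref{eq:lossH}. This holds in particular for the competitor sequence $\{q_t^\star\}_{t=1}^T$ that minimises the cumulative squared forecast error among all sequences with at most $L$ change points, so $\sum_{t=1}^T h_t \leqslant C + \sum_{t=1}^T (q_t^\star)^\top l_t$.

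The only remaining point is to connect $\sum_{t=1}^T h_t$ to the left-hand side of the proposition, namely $\tfrac12 \sum_{t=1}^T (y_t - \hat{y}^{\mathrm{CH}}_t)^2$. Here I would invoke convexity of the squared-error loss: since $\hat{y}^{\mathrm{CH}}_{t+1} = \sum_{k=1}^K \bar{w}_{k,t+1}\, \hat{y}^{(k)}_{t+1}$ is a convex combination of the experts' forecasts (with normalised weights $\bar{w}_{k,t+1} = w_{k,t+1}/\sum_m w_{m,t+1}$), and $x \mapsto \tfrac12(y-x)^2$ is convex in $x$, Jensen's inequality gives
\[
\frac{1}{2}\left(y_{t} - \hat{y}^{\mathrm{CH}}_{t}\right)^2 \;\leqslant\; \sum_{k=1}^K \bar{w}_{k,t}\, \frac{1}{2}\left(y_{t} - \hat{y}^{(k)}_{t}\right)^2 \;=\; \sum_{k=1}^K \bar{w}_{k,t}\, l_{k,t} \;=\; h_{t},
\]
using Eqs.~\eqref{eq:lossL}--\eqref{eq:lossH}. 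Summing over $t=1,\ldots,T$ yields $\tfrac12\sum_{t=1}^T (y_t - \hat{y}^{\mathrm{CH}}_t)^2 \leqslant \sum_{t=1}^T h_t$, and chaining this with the regret bound from the previous paragraph closes the argument.

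I do not anticipate a genuine obstacle; the proof is a composition of (i) the externally supplied regret guarantee and (ii) Jensen's inequality. The one subtlety worth stating carefully is that the bound $C$ of~\cite{VyuginTrunov_2019} must be valid against an \emph{arbitrary} comparator sequence with at most $L$ change points (not merely against a fixed single expert), which is exactly what makes it legitimate to specialise the comparator to the optimal $\{q_t^\star\}$; the excerpt explicitly notes this property of ConfHedge, so it may be invoked directly. A secondary care point is the normalisation of the ConfHedge weights: the $w_{k,t+1}$ in Eq.~\eqref{eq:VyuginW} need not sum to one, but $h_{t+1}$ in Eq.~\eqref{eq:lossH} is defined with the \emph{unnormalised} $w_{t+1}$ while the prediction uses the normalised ones. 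I would therefore either (a) assume, as is standard and as Eq.~\eqref{eq:lossH}'s notation suggests, that the weights are kept normalised so that $\sum_k w_{k,t+1}=1$ and the two coincide, or (b) note that replacing $w$ by $\bar w$ in $h_{t+1}$ changes nothing in the chain above. Under convention (a) the Jensen step is immediate and the proposition follows.
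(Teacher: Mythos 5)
Your proposal is correct and follows essentially the same route as the paper: apply the shifting-regret bound with the optimal comparator $\{q_t^\star\}$ to get $\sum_t h_t \leqslant C + \sum_t (q_t^\star)^\top l_t$, then use Jensen's inequality (convexity of the squared loss) to bound $\tfrac12(y_t-\hat{y}^{\mathrm{CH}}_t)^2$ by $h_t$. Your remark on the weight normalisation is a fair point of care but does not change the argument, which matches the paper's proof step for step.
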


\begin{proof}

The upper bound on the shifting regret by~\cite{VyuginTrunov_2019}
applies for any sequence $\{q_t\}_{t=1}^T$ with at most $L$ change points.
\begin{align*}
C + \sum_{t=1}^T (q_{t}^\star)^\top l_{t} & \geqslant  \sum_{t=1}^T w_t^\top l_t \\
& = \frac{1}{2} \sum_{t=1}^T  \sum_{k=1}^K w_{k,t} \left( y_t - y^{(k)}_t \right)^2 \\
& \geqslant \frac{1}{2}  \sum_{t=1}^T \left( y_t - \sum_{k=1}^K w_{k,t} y^{(k)}_t \right)^2\\
& =\frac{1}{2}  \sum_{t=1}^T  \left( y_t - \hat{y}^{\text{CH}}_t \right)^2.
\end{align*}
The second inequality follows from Jensen's inequality.

\end{proof}

\section{Simulation Experiments}\label{sec:Exp}

In this section, we investigate the behaviour of the proposed AF-DLM
on simulated data originating from static, gradually drifting and
abruptly changing data generating processes. 
We set the dimensionality of the covariate vector to five
to enable the visualisation of the path of the estimated coefficients.
In all cases, the covariates are sampled from a Gaussian distribution 
$x_t \sim \N(0,I)$, $t=1,\ldots,1000$, while the noise term in the measurement
equation has unit variance, $\sigma^2_t=1$ in Eq.~(\ref{eq:meas}).

\subsubsection*{Static Environment}

A static environment is characterised by a constant coefficient vector,
$\theta_t = \theta_0$.  We set $\theta_0=(-2,-1,1,2,3)^\top$ and obtain 100
time series of $y_t$ by randomly sampling from the measurement equation,
Eq.~(\ref{eq:meas}).
In Figure~\ref{fig1a} solid lines correspond to the actual values of each coefficient,
while dotted lines and shaded regions of the same colour represent the
median and the interquartile range of the estimated coefficient,
respectively.
The figure shows that $\est_t$ converges rapidly
to $\theta_0$, and the estimates exhibit very little variability across the 100~simulations.

\begin{figure}[!h]
\begin{center}
\subfigure[Coefficients]{\includegraphics[scale=0.24]{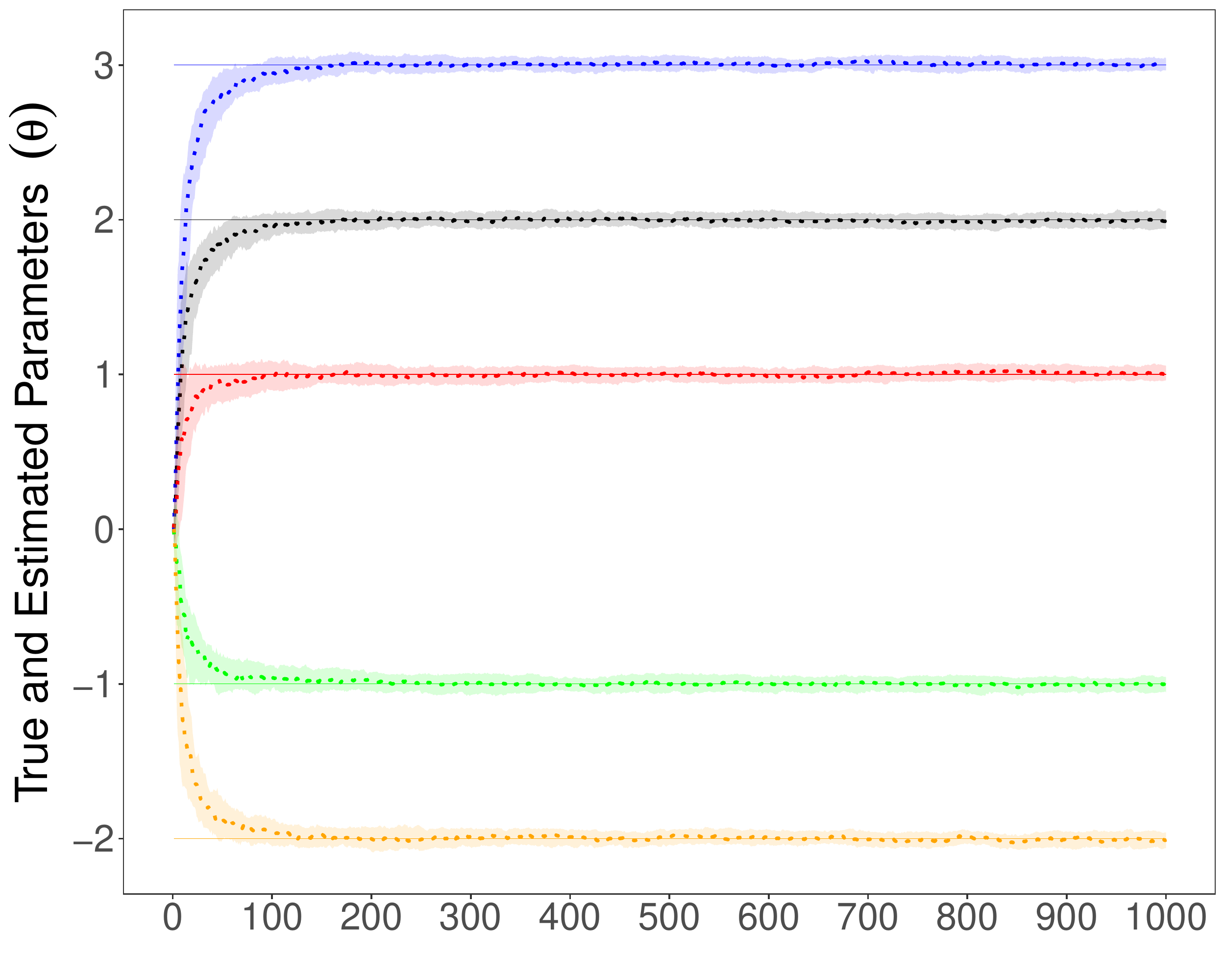}\label{fig1a}} \hspace{0.2cm}
\subfigure[Forgetting factor]{\includegraphics[scale=0.24]{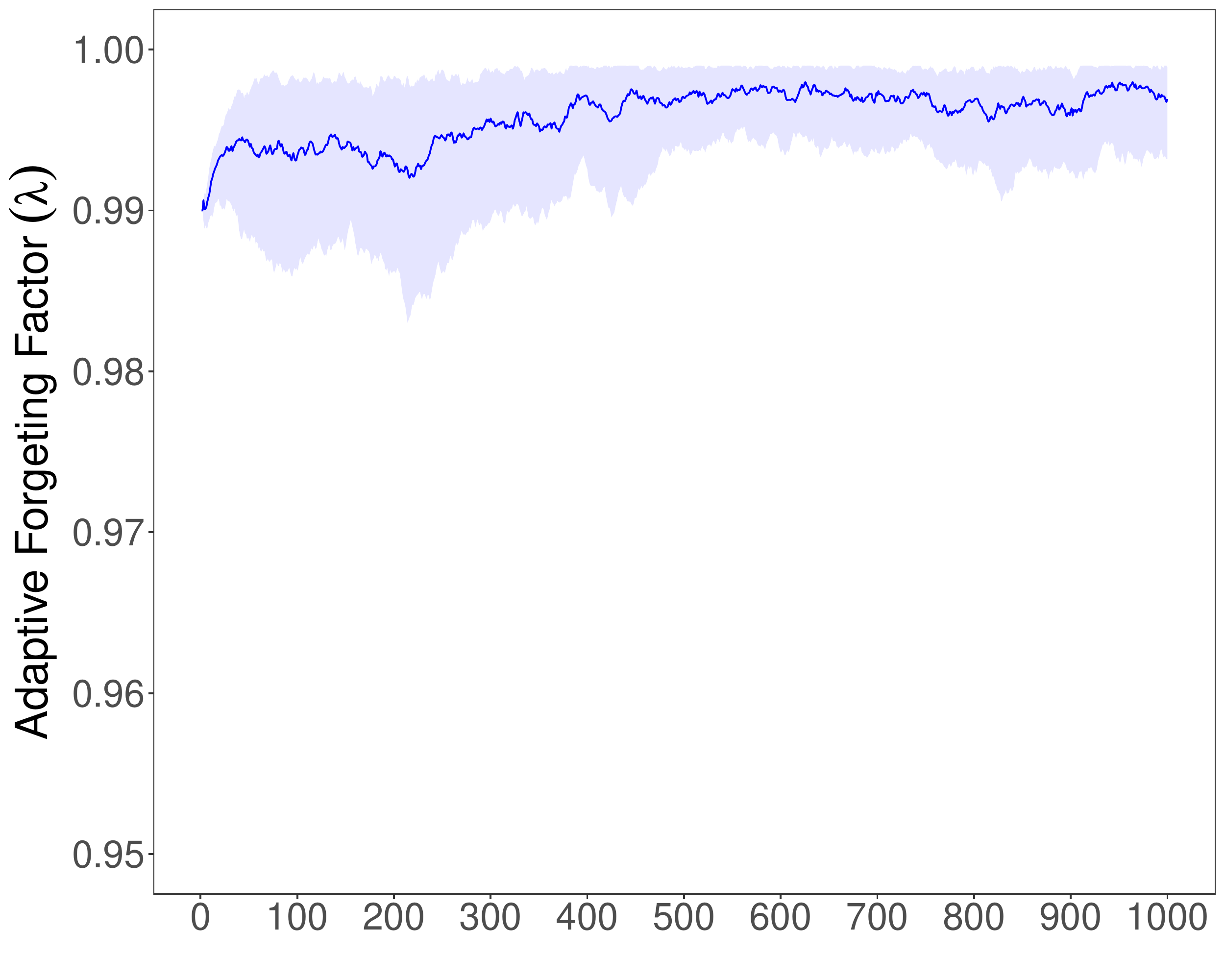}\label{fig1b}}
\end{center}
\caption{Evolution of estimated coefficients, and the
forgetting factor, $\lambda$, for a static data generating process.}
\label{fig:static}
\end{figure}

Figure~\ref{fig1b} illustrates the evolution of the forgetting factor,
$\lambda_t$, through the ADAM stochastic gradient descent algorithm. 
The value of $\lambda_t$ is high throughout the simulation, and as $\est_t$
converges to $\theta_0$ the median value of~$\lambda_t$ increases and the
variability across simulations decreases.
As the forgetting factor tends to unity, past and present examples become equally
weighted and consequently parameter estimates become more accurate and less
variable.

\subsubsection*{Abrupt Change}

Next, we consider dynamic environments in which the coefficient vector
changes at distinct time points, and remains constant in-between 
consecutive change points. 
We consider again time series of length 1000, and introduce change points at $t=100,400,700$.
As in the static environment, we simulate data from a single 
time series of $\theta_t$ and create
100 time series of $y_t$ by different realisations of the noise term in the
measurement equation, Eq~(\ref{eq:meas}).
The time series of the coefficient vector $\theta_t$ is specified by $\theta_0
= (3,2,1,-1,-2)^\top$ and, 
\[
\theta_t = \left\{ \begin{array}{ll} 0.5 \,
	\theta_{t-1} & \textrm{if } t=100, \\
	1.4 \, \theta_{t-1} & \textrm{if } t= 400, \\ 
	0.7 \, \theta_{t-1}, & \textrm{if } t = 700, \\
	\theta_{t-1}, & \textrm{otherwise}.  \end{array} \right.  
\]
The specific trajectory for~$\theta_t$ is selected because it allows
a clear visualisation of the evolution of $\est_t$ at each time-step.
Solid lines in Figure~\ref{fig:ab1} depict the evolution of $\theta_{t}$
while dotted lines and shaded areas of the same colour
correspond to the median estimated parameter and the associated interquartile
range, respectively.

At the first change point, $t=100$, the magnitude of the change in
every element of $\theta_t$ is the largest. Figure~\ref{fig:ab2} shows that
$\lambda_t$ decreases very rapidly in response to this,
and by the time step $t=200$ it assumes the smallest values observed during these simulations.
The minimum value of $\lambda_t$ observed is lower than 0.9 which
implies a very aggressive forgetting of past information, or equivalently
a very small effective window size.
As $\est_t$ approaches $\theta_t$, the forgetting factor steadily increases and
approaches its maximum value when the two almost coincide. This occurs right
before the second change point at $t=400$. The change in $\theta_t$
at $t=400$ is much smaller than that at the first change
point, and this is reflected in the evolution of the forgetting factor.
As Figure~\ref{fig:ab2} shows $\lambda_t$
decreases rapidly following the second change point but 
the lowest median value, which is close to 0.95, is much
higher than the corresponding minimum following the first change point. Subsequently,
the forgetting factor increases steadily as the estimated coefficients converge to the true values.
The third change point at $t=700$ reverts $\theta_t$ to its value prior to the second change point.
As Figure~\ref{fig:ab2} shows the effect of this change point on $\lambda_t$ 
is very similar to the pattern observed after the second change point. 

\begin{figure}[!h]
\begin{center}
\subfigure[Coefficients]{\includegraphics[scale=0.24]{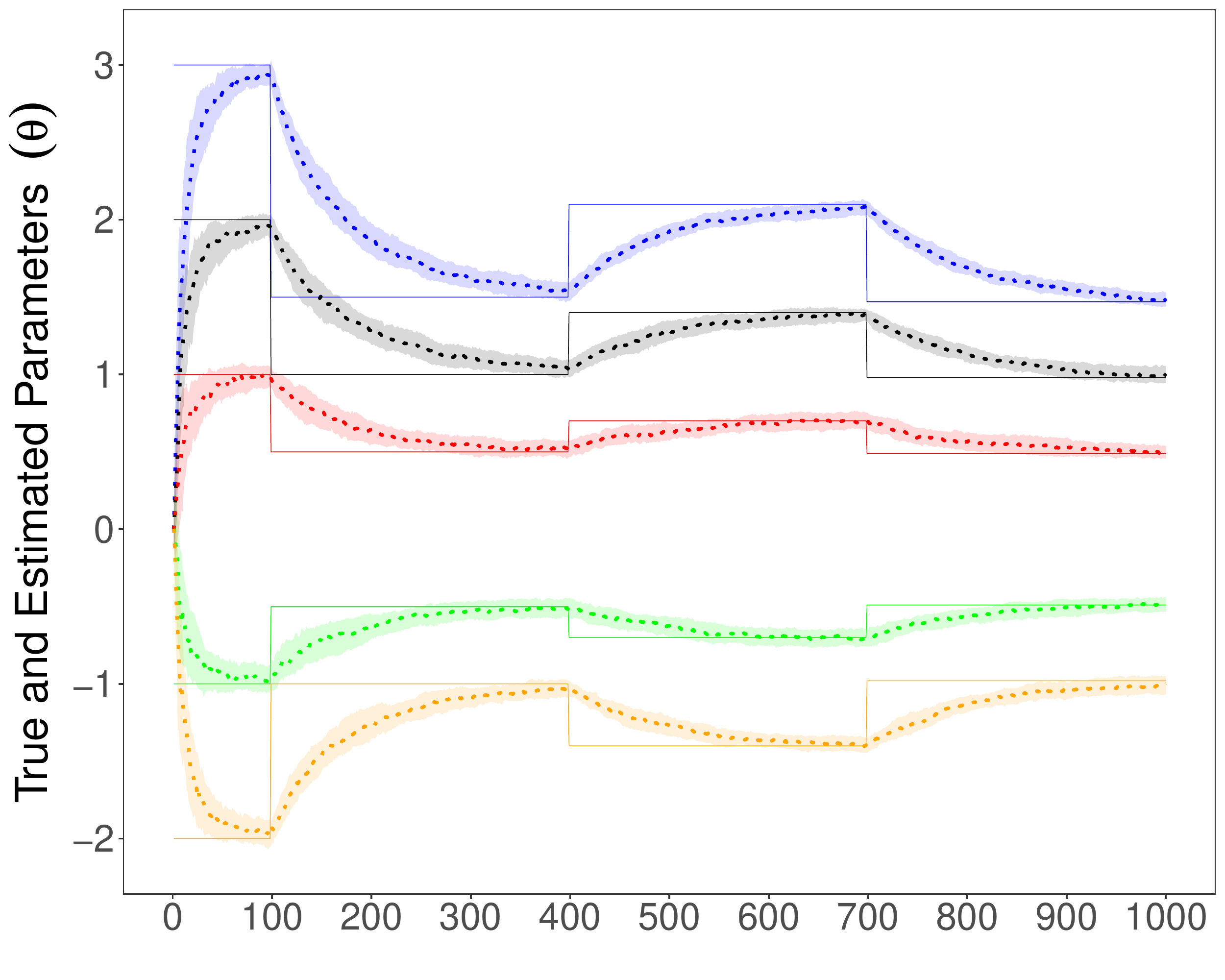}\label{fig:ab1}} \hspace{0.2cm}
\subfigure[Forgetting factor]{\includegraphics[scale=0.24]{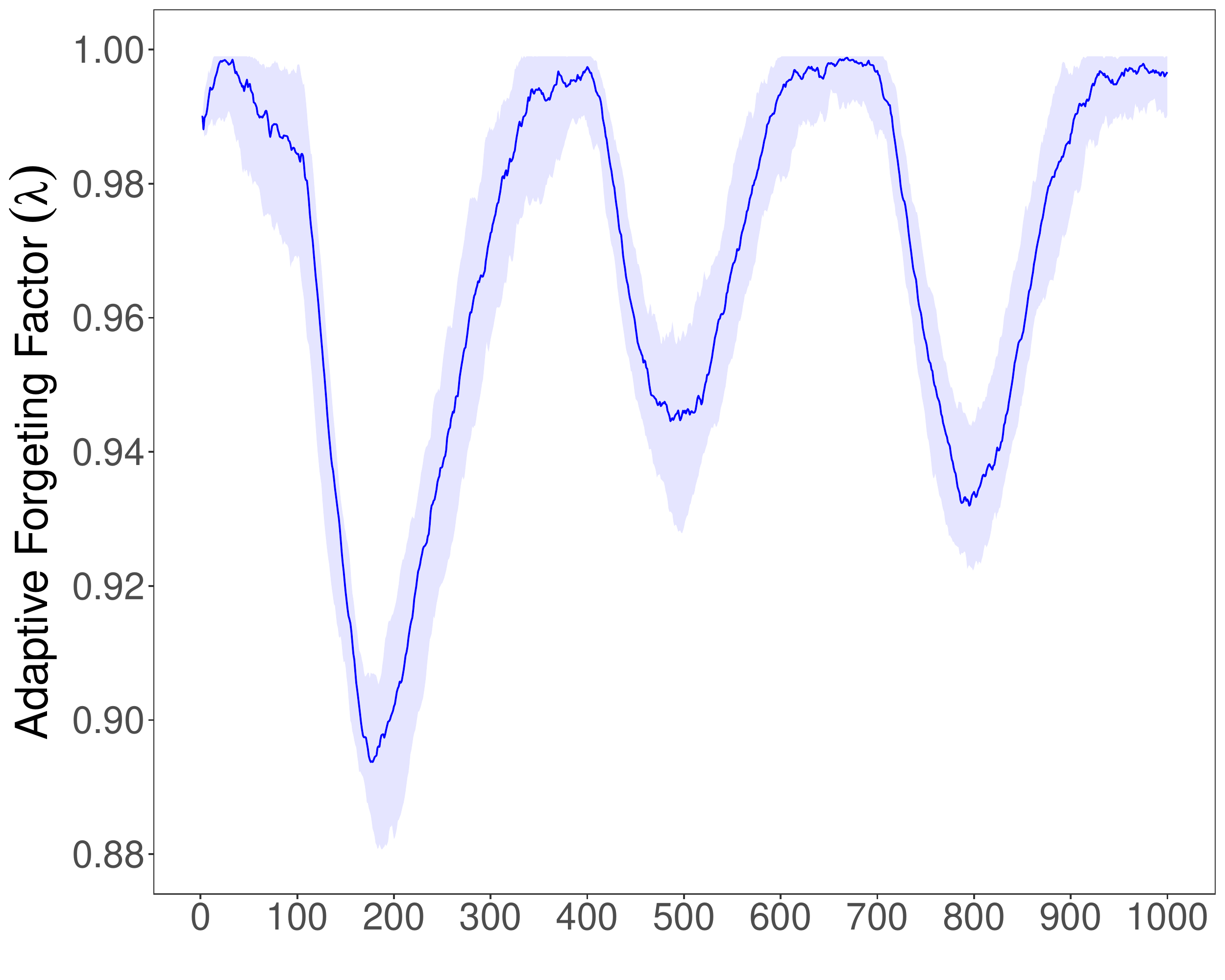}\label{fig:ab2}}
\end{center}
\caption{Evolution of estimated coefficients, and the
forgetting factor, $\lambda$, for a data generating process with abrupt changes.}
\label{fig:abrupt}
\end{figure}

Overall, the results depicted in Figure~\ref{fig:abrupt} indicate that the
proposed adaptive forgetting algorithm is capable of tuning $\lambda_t$
effectively in the presence of change points. Following each change point
AF-DLM induces a sharp decline in $\lambda_t$, which enables the estimated
coefficients to adjust rapidly. As $\est_t$ converges
to the true coefficients, which are static in-between consecutive change points,
$\lambda_t$ increases and, if the interval between consecutive change points
is sufficiently long, it approaches unity.

\subsubsection*{Gradual Drift}

Finally, we consider time series in which the coefficient vector changes
gradually over time.  For this purpose, we simulate from the state-space model
assumed by the DMA algorithm, namely Eqs.~(\ref{eq:state}) to~(\ref{eq:Wt}),
with $\theta_0 = 0$.
Our objective in this case is to evaluate whether the proposed adaptive
forgetting method can identify the true value of $\lambda$.
Note that in this case we are not able to simulate different realisations
of~$y_t$ for a single time series of~$\theta_t$, since 
the state transition covariance matrix,~$W_t$,
depends on the previous realisations of the forecast error.
\begin{figure}[!h]
\begin{center}
\includegraphics[scale=0.24]{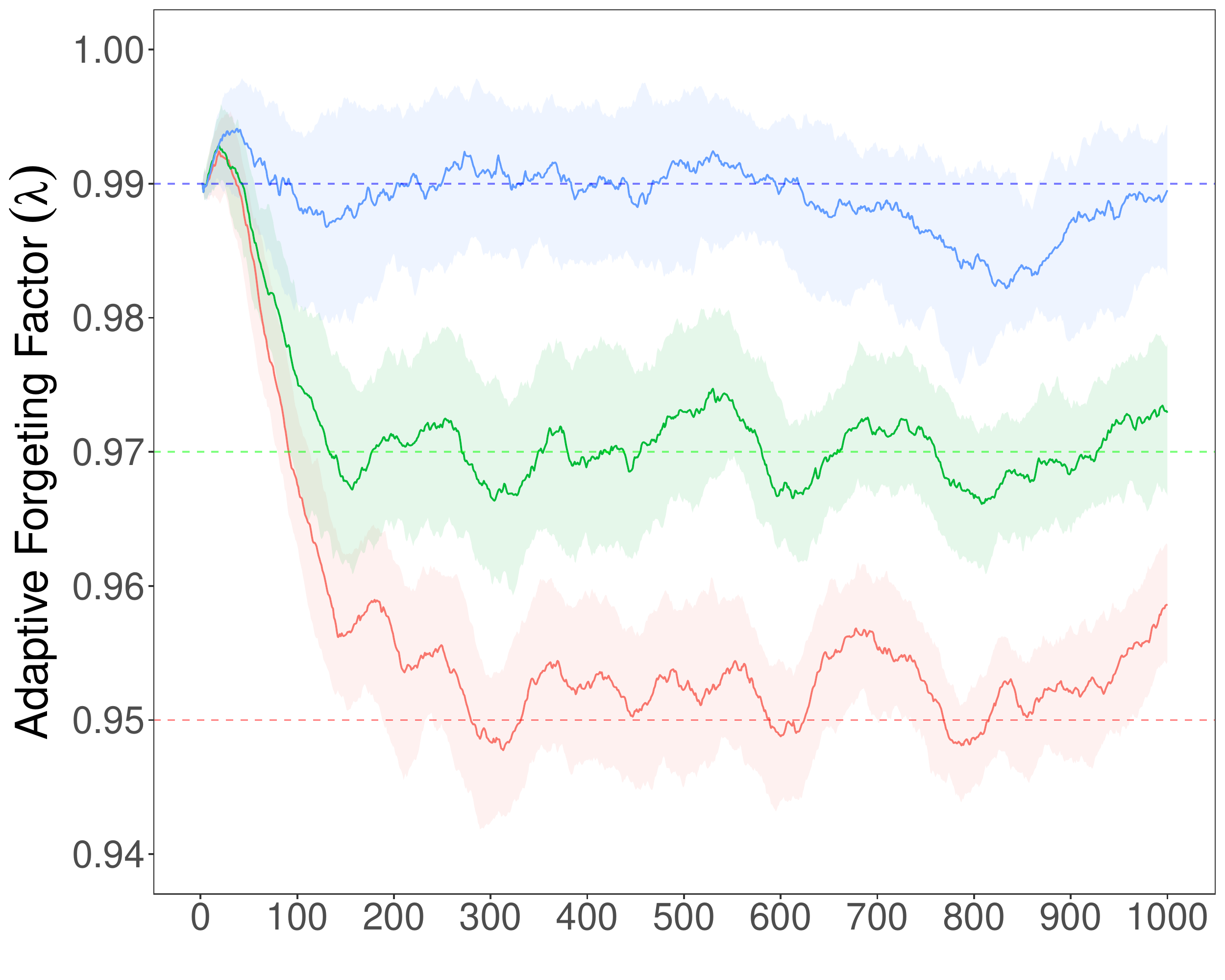}
\end{center}
\caption{Evolution of $\lambda_t$ 
	for time-series data sampled from the state-space model assumed by DMA
	for three different values of $\lambda$.}
\label{fig:driftL}
\end{figure}
We consider three values of the forgetting factor,~$\lambda \in \{0.99, 0.97,
0.95\}$, and for each value simulate 100 time series.
In Figure~\ref{fig:driftL} the dashed horizontal lines correspond to the true
values of $\lambda$, while the solid lines and the shaded regions of the same
colour depict the median and interquartile range of $\lambda_t$, respectively.
As the figure shows the proposed method rapidly adjusts $\lambda_t$ towards
the true value.
After the initial adjustment period $\lambda_t$ fluctuates around the true
value. This fluctuation is more variable for smaller values of $\lambda$,
which is consistent with the model since a smaller forgetting factor implies a higher variability in the trajectory of~$\theta_t$.
Note that beyond the initial adjustment period, the median value of $\lambda_t$ is never more than
0.01 away from the true value. Furthermore, the interquartile range
of $\lambda_t$ contains the true value in the vast majority of time-steps (the
only exceptions occur for $\lambda=0.95$ and their duration is short).

Overall, the results on simulated time series illustrate that AF-DLM can effectively
tune the value of the forgetting factor under different types of variation in
the data generating process. 
In response to abrupt changes AF-DLM decreases $\lambda$ sharply thereby enabling
the estimated coefficients to adjust rapidly. When the coefficients
are static
the adaptive forgetting factor 
tends to unity hence improving the accuracy and stability of the
estimation process. 
In cases where the coefficients change gradually the forgetting factor fluctuates
around a value that reflects the speed of drift. 
This concludes our empirical evaluation of the AF-DLM and in the next section
we focus on the comparative evaluation of ADMA on the 
problem of forecasting UK regional house prices.

\section{Forecasting UK Regional House Prices} \label{sec:House}

For our empirical application, we employ quarterly seasonally adjusted regional
house price indices for the period 1982:Q1 to 2017:Q4. The data is provided by
Nationwide, the largest building society in the world and one of the largest
mortgage providers in the UK. Following Nationwide's classification, we
consider 13 regional housing markets: the North, Yorkshire and Humberside,
North West, East Midlands, West Midlands, East Anglia, Outer South East, Outer
Metropolitan, Greater London, South West, Wales, Scotland and Northern Ireland.
To transform nominal into real prices, we divide by the consumer price index
(all items), obtained from the OECD Database of Main Economic Indicators, and
then compute the annualised log transformation of real property price inflation
as,
\begin{equation}\label{eq:data}
y_{r,t} = 400 \times \ln \left( \frac{P_{r,t}}{P_{r,t-1}} \right),
\hspace{0.4cm} r=1,\dotsc , 13,
\end{equation}
where $P_{r,t}$ stands for the level of the real house price index of market
$r$ at time $t$. 

For each region in our sample, we consider eleven economic variables as
potential predictors of future house price movements: four regional-level and
seven national-level predictors. The variables measured at the regional level
include the price-to-income ratio (which proxies for affordability), income
growth, the unemployment rate, and the growth in labour force. National-level
predictors consist of the real mortgage rate, the spread between yields on
long-term and short-term government securities, growth in industrial
production, the number of housing starts, growth in real consumption, the
Credit Conditions Index (CCI) proposed by~\cite{Corugedo_2006}, and a new
measure of House Price Uncertainty (HPU) which we construct using the news
based methodology of~\citet{BakerScott_2016}. For a description of the
variables, the data sources and the transformations undertaken we refer
the reader to Appendix~\ref{appendix:a}. 

From the above predictors, the first nine have been used
by~\cite{BorkMoller_2015} to forecast house price movements in US metropolitan
states. The last two have not been employed in a forecasting context before but
may well have predictive content for future house price inflation. With regard
to CCI, credit supply conditions in the UK economy, especially in the mortgage
market, have changed dramatically since the 1970s. As argued by several
authors, such changes were at the heart of the housing boom that preceded the
Great Recession. It therefore seems natural to investigate whether an index of
credit conditions may contain valuable information for forecasting.\footnote{A
deficiency of simple proxies for credit conditions, such as interest-rate
spreads and unsecured credit to income ratios, is that they fail to control for
the economic environment, and are thus subject to an endogeneity problem. The
methodology of \cite{Corugedo_2006} mitigates this problem by making use of a
large number of economic and demographic controls.} Similarly, changes in house
price uncertainty  impact on housing investment and real estate construction
decisions \citep{Cunningham2006,BanksBOS2015,OhY2019}, and thus may lead to
future house price movements. 

In addition to macro and financial variables, there is a substantial empirical
literature that documents the existence of strong spatial linkages between UK
regional markets in sample \citep[see, e.g.,][\textit{inter alia}]{Drake_1995,
Meen_1999, CookThomas_2003, Holly_2010,Antonakakis_CFG2018}. To accommodate
this, we incorporate in the set of potential predictors lagged property price
growth in contiguous regions. The number of neighbouring regions for each of
the 13 real estate markets under consideration lies in the range of one to
five.

\subsection{In-Sample Evidence of Structural Instability}

Before proceeding to the forecasting exercise, we examine whether there is
evidence of structural instability in the relationship between real house price
inflation and individual house price predictors in sample. To do so, we employ
two tests proposed by \citet{ChenH2012}. The first is a Hausman-type ($H$) test
that compares time-varying parameter estimates obtained by local linear
regression to constant estimates obtain by ordinary least squares. The second
is a Chow-type ($C$) test which compares the sum of squared residuals between
the constant parameter and local linear regression models. The null hypothesis
in both tests is that of time-invariant regression coefficients. 

The $H$ and $C$ tests have a number of attractive features. First, because they
impose minimal restrictions on the functional form of the time-varying
parameters, they are consistent with both smooth and abrupt structural change
and, from this perspective, correspond well to the AF-DLM of Section
\ref{sec:meth}. Second, they require no prior information regarding the timing
and the number of breaks. Third, they are asymptotically pivotal and, fourth,
they do not involve trimming of the boundary region near the end points of the
sample period.

\begin{table}[h]
\caption{Stability Test Results}
\label{stability1}
\hspace{-2.0cm}\begin{tabular}{lcccccccccccccc}
  \hline\hline
   & \multicolumn{2}{c}{\scriptsize EA} & \multicolumn{2}{c}{\scriptsize EM} & \multicolumn{2}{c}{\scriptsize GL} & \multicolumn{2}{c}{\scriptsize NI} & \multicolumn{2}{c}{\scriptsize NT} & \multicolumn{2}{c}{\scriptsize NW} & \multicolumn{2}{c}{\scriptsize OM}   \\
   & \scriptsize $H$ & \scriptsize $C$  & \scriptsize $H$ & \scriptsize $C$ & \scriptsize $H$ & \scriptsize $C$  & \scriptsize $H$ & \scriptsize $C$ & \scriptsize $H$ & \scriptsize $C$ & \scriptsize $H$ & \scriptsize $C$ & \scriptsize $H$ & \scriptsize $C$\\ [1 ex] \hline 
    \multicolumn{15}{c}{ \small Univariate Predictor Regressions} \\

\scriptsize RATIO & \scriptsize 0.0011 & \scriptsize 0.0015   &\scriptsize 0.0018 &  \scriptsize  0.0019 & \scriptsize 0.0003 & \scriptsize 0.0000 & \scriptsize 0.0000 & \scriptsize 0.0000 & \scriptsize 0.0169 & \scriptsize 0.0013 & \scriptsize 0.0808 & \scriptsize 0.0322 & \scriptsize 0.0001 & \scriptsize 0.0002  \\

\scriptsize GROWTH & \scriptsize 0.0006 &  \scriptsize 0.0008   & \scriptsize 0.0008  & \scriptsize 0.0000   & \scriptsize 0.0000  & \scriptsize 0.0000  & \scriptsize 0.0000 & \scriptsize 0.0000 & \scriptsize 0.0017 & \scriptsize 0.0011 & \scriptsize 0.0002 & \scriptsize 0.0000 & \scriptsize 0.0000  & \scriptsize 0.0000 \\ 
 
\scriptsize UR & \scriptsize 0.0000 & \scriptsize 0.0000 & \scriptsize 0.0003  & \scriptsize 0.0000  & \scriptsize 0.0000  & \scriptsize 0.0000 & \scriptsize 0.0018 & \scriptsize 0.0024 & \scriptsize 0.0000 & \scriptsize 0.0000 & \scriptsize 0.0000 & \scriptsize 0.0000 & \scriptsize 0.0000 & \scriptsize 0.0000  \\ 

\scriptsize LF & \scriptsize 0.0000 & \scriptsize 0.0000   & \scriptsize 0.0000  & \scriptsize 0.0000  & \scriptsize 0.0000 & \scriptsize 0.0000 & \scriptsize 0.0202 & \scriptsize 0.0065 & \scriptsize 0.0039 & \scriptsize 0.0020 & \scriptsize 0.0001 & \scriptsize 0.0000 & \scriptsize 0.0001 & \scriptsize 0.0000 \\ 

\scriptsize HS & \scriptsize 0.0000 & \scriptsize 0.0000  & \scriptsize 0.0001  & \scriptsize 0.0001  & \scriptsize 0.0000 & \scriptsize 0.0000  & \scriptsize 0.0014 & \scriptsize 0.0127 & \scriptsize 0.0052 & \scriptsize 0.0004 & \scriptsize 0.0002 & \scriptsize 0.0000 & \scriptsize 0.0000 & \scriptsize 0.0000 \\ 

\scriptsize CONS & \scriptsize 0.0001 & \scriptsize 0.0001  & \scriptsize 0.0001  & \scriptsize 0.0000 & \scriptsize 0.0000  & \scriptsize 0.0000  & \scriptsize 0.0002 & \scriptsize 0.0009 & \scriptsize 0.0034 & \scriptsize 0.0061 & \scriptsize 0.0001 & \scriptsize 0.0003 & \scriptsize 0.0000 & \scriptsize 0.0000 \\ 
 
\scriptsize INDUS & \scriptsize 0.0003 & \scriptsize 0.0000 & \scriptsize 0.0000 & \scriptsize 0.0000 & \scriptsize 0.0000  & \scriptsize 0.0000 & \scriptsize 0.0018 & \scriptsize 0.0023 & \scriptsize 0.0002 & \scriptsize 0.0003 & \scriptsize 0.0000 & \scriptsize 0.0000 & \scriptsize 0.0000 & \scriptsize 0.0000 \\ 
 
\scriptsize RABMR & \scriptsize 0.0004 & \scriptsize 0.0000   & \scriptsize 0.0006 & \scriptsize 0.0001 & \scriptsize 0.0000 & \scriptsize 0.0000 & \scriptsize 0.0222 & \scriptsize 0.0033  & \scriptsize 0.0015 & \scriptsize 0.0000 & \scriptsize 0.0000 & \scriptsize 0.0000 & \scriptsize 0.0000 & \scriptsize 0.0000 \\ 

\scriptsize SPREAD & \scriptsize 0.0000 & \scriptsize 0.0000   & \scriptsize 0.0000  & \scriptsize 0.0000  & \scriptsize 0.0000 & \scriptsize 0.0000  & \scriptsize 0.0752 & \scriptsize 0.0859 & \scriptsize 0.0000 & \scriptsize 0.0001 & \scriptsize 0.0000 & \scriptsize 0.0000 & \scriptsize 0.0000 & \scriptsize 0.0000 \\ 
 
\scriptsize CCI & \scriptsize 0.0000 & \scriptsize 0.0000   & \scriptsize 0.0027 & \scriptsize 0.0000  & \scriptsize 0.0000 & \scriptsize 0.0000 & \scriptsize 0.0000 & \scriptsize 0.0008 & \scriptsize 0.0197 & \scriptsize 0.0034 & \scriptsize 0.0028 & \scriptsize 0.0004 & \scriptsize 0.0000 & \scriptsize 0.0000 \\ 

\scriptsize HPU & \scriptsize 0.0001 & \scriptsize 0.0199  & \scriptsize 0.0000  & \scriptsize 0.0000  & \scriptsize 0.0000 & \scriptsize 0.0000 & \scriptsize 0.0000 & \scriptsize 0.0000 & \scriptsize 0.0012 & \scriptsize 0.0037 & \scriptsize 0.0000 & \scriptsize 0.0000 & \scriptsize 0.0000 & \scriptsize 0.0003 \\  \hline
\end{tabular}
\begin{tablenotes}
\footnotesize
\item Notes: The table reports wild-bootstrap $p$-values, based on $B = 9999$ iterations, of the Hausman- ($H$) and Chow-type ($C$) structural stability tests of \citet{ChenH2012}.  
\end{tablenotes}
\end{table}

 \begin{table}
\caption{Stability Test Results (Cont.)}
\label{stability2}
\centering
\begin{tabular}{lcccccccccccc}
  \hline\hline
   & \multicolumn{2}{c}{\scriptsize OSE} & \multicolumn{2}{c}{\scriptsize SC} & \multicolumn{2}{c}{\scriptsize SW} & \multicolumn{2}{c}{\scriptsize WM} & \multicolumn{2}{c}{\scriptsize WW} & \multicolumn{2}{c}{\scriptsize YH}  \\
   & \scriptsize $H$ & \scriptsize $C$  & \scriptsize $H$ & \scriptsize $C$ & \scriptsize $H$ & \scriptsize $C$  & \scriptsize $H$ & \scriptsize $C$ & \scriptsize $H$ & \scriptsize $C$ & \scriptsize $H$ & \scriptsize $C$  \\ [1 ex] \hline 
    \multicolumn{13}{c}{ \small Univariate Predictor Regressions} \\

\scriptsize RATIO & \scriptsize 0.0000 & \scriptsize 0.0000   &\scriptsize 0.0037 &  \scriptsize  0.0058 & \scriptsize 0.0180 & \scriptsize 0.0018 & \scriptsize 0.0181 & \scriptsize 0.0112 & \scriptsize 0.0000 & \scriptsize 0.0000 & \scriptsize 0.0043 & \scriptsize 0.0017   \\

\scriptsize GROWTH & \scriptsize 0.0000 &  \scriptsize 0.0000   & \scriptsize 0.0002  & \scriptsize 0.0000   & \scriptsize 0.0002  & \scriptsize 0.0002  & \scriptsize 0.0003 & \scriptsize 0.0001 & \scriptsize 0.0023 & \scriptsize 0.0013 & \scriptsize 0.0030 & \scriptsize 0.0078  \\ 
 
\scriptsize UR & \scriptsize 0.0000 & \scriptsize 0.0000 & \scriptsize 0.0015  & \scriptsize 0.0000  & \scriptsize 0.0000  & \scriptsize 0.0000 & \scriptsize 0.0000 & \scriptsize 0.0000 & \scriptsize 0.0001 & \scriptsize 0.0000 & \scriptsize 0.0000 & \scriptsize 0.0000  \\ 

\scriptsize LF & \scriptsize 0.0000 & \scriptsize 0.0000   & \scriptsize 0.0012  & \scriptsize 0.0003  & \scriptsize 0.0000 & \scriptsize 0.0000 & \scriptsize 0.0006 & \scriptsize 0.0002 & \scriptsize 0.0000 & \scriptsize 0.0000 & \scriptsize 0.0059 & \scriptsize 0.0019  \\ 

\scriptsize HS & \scriptsize 0.0000 & \scriptsize 0.0000  & \scriptsize 0.0192  & \scriptsize 0.0357  & \scriptsize 0.0000 & \scriptsize 0.0000  & \scriptsize 0.0000 & \scriptsize 0.0000 & \scriptsize 0.0041 & \scriptsize 0.0013 & \scriptsize 0.0008 & \scriptsize 0.0009  \\ 

\scriptsize CONS & \scriptsize 0.0000 & \scriptsize 0.0000  & \scriptsize 0.0018  & \scriptsize 0.0048 & \scriptsize 0.0000  & \scriptsize 0.0000  & \scriptsize 0.0003 & \scriptsize 0.0000 & \scriptsize 0.0010 & \scriptsize 0.0008 & \scriptsize 0.0005 & \scriptsize 0.0016  \\ 
 
\scriptsize INDUS & \scriptsize 0.0000 & \scriptsize 0.0000 & \scriptsize 0.0000 & \scriptsize 0.0001 & \scriptsize 0.0000  & \scriptsize 0.0000 & \scriptsize 0.0000 & \scriptsize 0.0000 & \scriptsize 0.0242 & \scriptsize 0.0057 & \scriptsize 0.0002 & \scriptsize 0.0003  \\ 
 
\scriptsize RABMR & \scriptsize 0.0000 & \scriptsize 0.0000   & \scriptsize 0.0006 & \scriptsize 0.0000 & \scriptsize 0.0000 & \scriptsize 0.0000 & \scriptsize 0.0008 & \scriptsize 0.0000  & \scriptsize 0.0002 & \scriptsize 0.0000 & \scriptsize 0.0134 & \scriptsize 0.0049   \\ 

\scriptsize SPREAD & \scriptsize 0.0000 & \scriptsize 0.0000   & \scriptsize 0.0012  & \scriptsize 0.0004  & \scriptsize 0.0000 & \scriptsize 0.0000  & \scriptsize 0.0000 & \scriptsize 0.0000 & \scriptsize 0.0000 & \scriptsize 0.0000 & \scriptsize 0.0000 & \scriptsize 0.0000  \\ 
 
\scriptsize CCI & \scriptsize 0.0000 & \scriptsize 0.0000   & \scriptsize 0.0001 & \scriptsize 0.0003  & \scriptsize 0.0000 & \scriptsize 0.0000 & \scriptsize 0.0000 & \scriptsize 0.0000 & \scriptsize 0.0446 & \scriptsize 0.0008 & \scriptsize 0.0408 & \scriptsize 0.0066 \\ 

\scriptsize HPU & \scriptsize 0.0000 & \scriptsize 0.0013  & \scriptsize 0.0000  & \scriptsize 0.0010  & \scriptsize 0.0001 & \scriptsize 0.0008 & \scriptsize 0.0000 & \scriptsize 0.0014 & \scriptsize 0.0003 & \scriptsize 0.0026 & \scriptsize 0.0000 & \scriptsize 0.0003  \\  \hline
 \end{tabular}
\begin{tablenotes}
\footnotesize
\item Notes: The table reports wild-bootstrap $p$-values, based on $B = 9999$ iterations, of the Hausman- ($H$) and Chow-type ($C$) structural stability tests of \citet{ChenH2012}.
\end{tablenotes}
\end{table}

Tables \ref{stability1} and  \ref{stability2} report wild-bootstrap $p$-values of the $H$ and $C$ tests for each of the 11 house price predictors considered and for each of the 13 regions. We observe that, out of the 286 $p$-values, none exceeds 10 percent, four exceed five percent, and the vast majority lie below the one percent threshold. This strong evidence of structural instability motivates the use of dynamic econometric models for forecasting house price inflation. 

\subsection{Comparison of Forecast Accuracy}

We begin our out-of-sample analysis by comparing the forecast accuracy of a battery of econometric models relative to the AR(1) benchmark as well as the performance of ADMA relative to each of the remaining models in the pool. This set of models consists of the DMA formulation of \cite{DanglH2012} (abbreviated as eDMA due to the {\tt R} implementation of \citet{CataniaNonejad_2018}), two versions of the DMA of~\cite{KoopKorobilis_2012} (one with relatively slow forgetting, $\lambda=\alpha=0.99$, and another with fast forgetting $\lambda=\alpha=0.95$), a single DLM with $\lambda=0.99$ that includes all available predictors and, finally, Bayesian Model Averaging (BMA). Table \ref{Models} provides an overview of all models. 

\begin{table*}
\caption{An Overview of the Alternative Forecasting Strategies}
\label{Models}
	\begin{tabular}{p{1.5cm}p{15cm}}
	 \hline
	\textbf{ADMA} &  Adaptive Dynamic Model Averaging which uses adaptive forgetting and the aggregating algorithm of~\citet{VyuginTrunov_2019}  \\
	\textbf{eDMA} &  Dynamic Model Averaging which uses the grid of values   $(0.90,0.91,\dotsc,0.99)$ for $\lambda$ and $\alpha = 1$ \citep{DanglH2012} \\
	\textbf{DMA}$_{0.99}$ & Dynamic Model Averaging with $\lambda=\alpha=0.99$  \citep{RafteryDMA2010,KoopKorobilis_2012}\\
	\textbf{DMA}$_{0.95}$ & Dynamic Model Averaging with $\lambda=\alpha=0.95$ \citep{KoopKorobilis_2012} \\
	\textbf{BMA} & Bayesian Model Averaging \citep{Hoeting1999} \\
	\textbf{DLM$_{0.99}$} & A single time-varying parameter model with $\lambda=0.99$ that includes all the predictors \\
	\textbf{AR(1)} &  Recursive AR(1) model \\	\hline	 
	\end{tabular}
\end{table*}

Table~\ref{ForecastAccuracy} summarizes the forecasting performance of each model relative to the AR(1) benchmark over the out-of-sample evaluation period, which runs from 1995:Q1 to 2017:Q4. The second column of the table provides the realised Mean Squared Forecast Error (MSFE) of the AR(1) model for each region, while the remaining columns report the ratio of the MSFE of the competing models to that of the AR(1). Tables~\ref{ADMAvsDMA} and  \ref{DMAvsADMA} report MFSE ratios when ADMA is set as the alternative and the benchmark model, respectively. In all three tables, a~$\dagger$ indicates cases when the test of~\cite{ClarkWest_2007} rejects the null hypothesis of equal predictive accuracy in favour of the one-sided alternative that the competing model outperforms the benchmark at the 5\% significance level. 

\begin{table*}
\caption{Summary of Forecasting Performance}
\label{ForecastAccuracy}
\centering
\begin{tabular}{lccccccc}
  \hline\hline
  \tiny (1) & \tiny (2) & \tiny (3) & \tiny (4) & \tiny (5) & \tiny (6) & \tiny (7) & \tiny (8)  \\
\scriptsize Region & \scriptsize AR(1) & \scriptsize ADMA  & \scriptsize eDMA & \scriptsize DMA$_{0.99}$ & \scriptsize DMA$_{0.95}$ & \scriptsize BMA  & \scriptsize DLM$_{0.99}$ \\ [1 ex] \hline

\scriptsize East Anglia & \scriptsize 101.05 & \scriptsize \textbf{0.80} $\dagger$  &\scriptsize 1.04 &  \scriptsize  1.03 & \scriptsize 1.24 & \scriptsize 1.32  & \scriptsize 1.39\\

\scriptsize East Midlands & \scriptsize 70.74 &  \scriptsize 0.80 $\dagger$  & \scriptsize \textbf{0.79} $\dagger$ & \scriptsize 0.84 $\dagger$  & \scriptsize 0.83 $\dagger$ & \scriptsize \textbf{0.79} $\dagger$ & \scriptsize 1.44 \\

\scriptsize Greater London & \scriptsize 117.29 & \scriptsize 0.74 $\dagger$ & \scriptsize 0.79 $\dagger$ & \scriptsize 0.79 $\dagger$ & \scriptsize 0.77 $\dagger$ & \scriptsize \textbf{0.73}$\dagger$ & \scriptsize 0.77 $\dagger$  \\

\scriptsize Northern Ireland & \scriptsize 292.44 & \scriptsize \textbf{0.89} $\dagger$  & \scriptsize 0.97 $\dagger$ & \scriptsize 0.92 $\dagger$ & \scriptsize 1.00 & \scriptsize 1.02 & \scriptsize 0.94 $\dagger$\\

\scriptsize North & \scriptsize 167.39 & \scriptsize 0.72 $\dagger$  & \scriptsize 0.75 $\dagger$ & \scriptsize 0.78 $\dagger$ & \scriptsize 0.74 $\dagger$ & \scriptsize \textbf{0.69} $\dagger$ & \scriptsize 0.78 $\dagger$ \\

\scriptsize North West & \scriptsize 64.59 & \scriptsize 0.93 $\dagger$  & \scriptsize \textbf{0.92} $\dagger$ & \scriptsize 1.02 & \scriptsize 0.94 $\dagger$ & \scriptsize 0.94 $\dagger$ & \scriptsize 1.11 \\

\scriptsize Outer Metropolitan & \scriptsize 54.51 & \scriptsize \textbf{0.95}$\dagger$ & \scriptsize 1.04 & \scriptsize 1.14 & \scriptsize 0.96 $\dagger$ & \scriptsize \textbf{0.95}$\dagger$ & \scriptsize 1.16  \\

\scriptsize Outer South East & \scriptsize 61.84 & \scriptsize \textbf{0.92} $\dagger$  & \scriptsize 1.01 & \scriptsize 1.06 & \scriptsize 1.05 & \scriptsize 1.06 & \scriptsize 1.33\\

\scriptsize Scotland & \scriptsize 78.18 & \scriptsize 0.94 $\dagger$  & \scriptsize \textbf{0.90} $\dagger$ & \scriptsize 0.99 $\dagger$ & \scriptsize 1.03 & \scriptsize 0.96 $\dagger$ & \scriptsize 0.99 $\dagger$ \\

\scriptsize South West & \scriptsize 60.25 & \scriptsize 0.99 $\dagger$  & \scriptsize 0.97 & \scriptsize \textbf{0.98} $\dagger$ & \scriptsize 1.15 & \scriptsize 1.13 & \scriptsize 1.58 \\

\scriptsize West Midlands & \scriptsize 53.34 & \scriptsize \textbf{0.85} $\dagger$  & \scriptsize 0.97 $\dagger$ & \scriptsize 0.95 $\dagger$ & \scriptsize 0.94 $\dagger$ & \scriptsize 0.94 $\dagger$ & \scriptsize 1.44 \\

\scriptsize Wales & \scriptsize 148.29 & \scriptsize 0.81 $\dagger$  & \scriptsize \textbf{0.79} $\dagger$ & \scriptsize 0.90 $\dagger$ & \scriptsize 0.80 $\dagger$ & \scriptsize 0.80 $\dagger$ & \scriptsize 0.87 $\dagger$\\

\scriptsize Yorkshire \& Humber & \scriptsize 91.75 & \scriptsize \textbf{0.78}$\dagger$  & \scriptsize 0.79 $\dagger$ & \scriptsize 0.88 $\dagger$ & \scriptsize 0.80 $\dagger$ & \scriptsize 0.79 $\dagger$ & \scriptsize 0.82 $\dagger$\\ \hline

\end{tabular}
\begin{tablenotes}
\footnotesize
\item Notes: The second column of the table reports the realised MSFE of the AR(1) model. For the remaining models, the table reports the ratio of their realised MSFE to that of the AR(1). The forecasting model with the lowest MSFE is in bold. A $\dagger$ indicates rejection of the null hypothesis of the~\cite{ClarkWest_2007}  test at the 5\% significance level.
\end{tablenotes}
\end{table*}

\begin{table*}
\caption{Forecasting Performance of ADMA Relative to Alternative Forecasting Strategies}
\label{ADMAvsDMA}
\centering
\begin{tabular}{llllll}
  \hline\hline
  \tiny (1) & \tiny (2) & \tiny (3) & \tiny (4) & \tiny (5) & \tiny (6) \\
\scriptsize Region & \scriptsize eDMA & \scriptsize DMA$_{0.99}$ & \scriptsize DMA$_{0.95}$ & \scriptsize BMA & \scriptsize DLM$_{0.99}$ \\ \hline

\scriptsize East Anglia & \scriptsize \textbf{0.78} $\dagger$ & \scriptsize \textbf{0.79} $\dagger$ & \scriptsize \textbf{0.66} $\dagger$ & \scriptsize \textbf{0.61} $\dagger$ & \scriptsize \textbf{0.58} $\dagger$\\

\scriptsize East Midlands & \scriptsize 1.02 & \scriptsize \textbf{0.96} $\dagger$ & \scriptsize \textbf{0.97} $\dagger$ & \scriptsize 1.02 & \scriptsize \textbf{0.56} $\dagger$ \\

\scriptsize Greater London & \scriptsize \textbf{0.94} $\dagger$ & \scriptsize \textbf{0.93} $\dagger$ & \scriptsize \textbf{0.96} $\dagger$ & \scriptsize 1.01 & \scriptsize \textbf{0.97} $\dagger$\\

\scriptsize Northern Ireland & \scriptsize \textbf{0.92} & \scriptsize \textbf{0.95} & \scriptsize \textbf{0.87} $\dagger$ & \scriptsize \textbf{0.88} $\dagger$ & \scriptsize \textbf{0.94} $\dagger$ \\

\scriptsize North & \scriptsize \textbf{0.95} $\dagger$ & \scriptsize \textbf{0.91} $\dagger$ & \scriptsize \textbf{0.97} $\dagger$ & \scriptsize 1.03 & \scriptsize \textbf{0.92} $\dagger$ \\

\scriptsize North West & \scriptsize 1.01 & \scriptsize \textbf{0.91} $\dagger$  & \scriptsize \textbf{0.99} & \scriptsize \textbf{0.99} & \scriptsize \textbf{0.83} $\dagger$\\

\scriptsize Outer Metropolitan & \scriptsize \textbf{0.92} $\dagger$ & \scriptsize \textbf{0.84} $\dagger$ & \scriptsize \textbf{0.99} & \scriptsize 1.00 & \scriptsize \textbf{0.82} $\dagger$ \\

\scriptsize Outer South East & \scriptsize \textbf{0.90} $\dagger$ & \scriptsize \textbf{0.87} $\dagger$ & \scriptsize \textbf{0.89} $\dagger$ & \scriptsize \textbf{0.86} $\dagger$ & \scriptsize \textbf{0.69} $\dagger$ \\

\scriptsize Scotland & \scriptsize 1.04 & \scriptsize \textbf{0.95} $\dagger$ & \scriptsize \textbf{0.91}$\dagger$ & \scriptsize \textbf{0.98} & \scriptsize \textbf{0.94} $\dagger$ \\

\scriptsize South West & \scriptsize 1.02 & \scriptsize 1.01 & \scriptsize \textbf{0.86} $\dagger$ & \scriptsize \textbf{0.88} $\dagger$ & \scriptsize \textbf{0.63} $\dagger$  \\

\scriptsize West Midlands & \scriptsize \textbf{0.87} $\dagger$ & \scriptsize \textbf{0.89} $\dagger$ & \scriptsize \textbf{0.89} $\dagger$ & \scriptsize \textbf{0.89}  $\dagger$ & \scriptsize \textbf{0.58} $\dagger$ \\

\scriptsize Wales & \scriptsize 1.02 & \scriptsize \textbf{0.89} $\dagger$ & \scriptsize 1.00 & \scriptsize 1.01 & \scriptsize \textbf{0.93} $\dagger$\\

\scriptsize Yorkshire \& Humber & \scriptsize \textbf{0.99} & \scriptsize \textbf{0.89} $\dagger$ & \scriptsize \textbf{0.98} & \scriptsize \textbf{0.99} & \scriptsize \textbf{0.96} $\dagger$ \\ \hline
\end{tabular}
\begin{tablenotes}
\footnotesize
\item Notes: The table reports the ratios of realised MSFEs of ADMA relative to eDMA, DMA\textsuperscript{0.99}, DMA\textsuperscript{0.95}, BMA and TVP. Figures highlighted in bold indicate that the realised MSFE of the ADMA is less than that of the alternative model. A $\dagger$ indicates rejection of the null hypothesis of the~\cite{ClarkWest_2007}  test at the 5\% significance level.
\end{tablenotes}
\end{table*}

\begin{table*}[h!]
\caption{Forecasting Performance of Alternative Forecasting Strategies Relative to ADMA}
\label{DMAvsADMA}
\centering
\begin{tabular}{llllll}
  \hline\hline
  \tiny (1) & \tiny (2) & \tiny (3) & \tiny (4) & \tiny (5) & \tiny (6) \\
\scriptsize Region & \scriptsize eDMA & \scriptsize DMA{$_{0.99}$} & \scriptsize DMA{$_{0.95}$} & \scriptsize BMA & \scriptsize DLM{$_{0.99}$} \\ \hline

\scriptsize East Anglia & \scriptsize 1.29  & \scriptsize 1.28 & \scriptsize 1.54 & \scriptsize 1.65 & \scriptsize 1.91 \\ 

\scriptsize East Midlands & \scriptsize \textbf{0.98} $\dagger$ & \scriptsize 1.04 & \scriptsize 1.03 & \scriptsize \textbf{0.98} $\dagger$ & \scriptsize 1.93 \\ 

\scriptsize Greater London & \scriptsize 1.06  & \scriptsize 1.07 & \scriptsize 1.04 & \scriptsize \textbf{0.99} & \scriptsize 1.02\\ 

\scriptsize Northern Ireland & \scriptsize 1.08 & \scriptsize 1.03 & \scriptsize 1.12 & \scriptsize 1.14 & \scriptsize 1.18 \\

\scriptsize North & \scriptsize 1.05 & \scriptsize 1.09 & \scriptsize 1.03 & \scriptsize \textbf{0.97} & \scriptsize 1.24 \\

\scriptsize North West & \scriptsize \textbf{0.99} & \scriptsize 1.10  & \scriptsize 1.01 & \scriptsize 1.01 & \scriptsize 1.22 \\

\scriptsize Outer Metropolitan & \scriptsize 1.08 & \scriptsize 1.19 & \scriptsize 1.01 & \scriptsize 1.00 & \scriptsize 1.33 \\

\scriptsize Outer South East & \scriptsize 1.10 & \scriptsize 1.16 & \scriptsize 1.14 & \scriptsize 1.16 & \scriptsize 1.55 \\

\scriptsize Scotland & \scriptsize \textbf{0.96} $\dagger$  & \scriptsize 1.05 & \scriptsize 1.09 & \scriptsize 1.02 & \scriptsize 1.12 \\

\scriptsize South West & \scriptsize \textbf{0.98} $\dagger$ & \scriptsize \textbf{0.98} $\dagger$  & \scriptsize 1.15 & \scriptsize 1.14 & \scriptsize 1.76  \\

\scriptsize West Midlands & \scriptsize 1.14 & \scriptsize 1.12 & \scriptsize 1.12 & \scriptsize 1.11 & \scriptsize 1.84 \\

\scriptsize Wales & \scriptsize \textbf{0.98} & \scriptsize 1.12 & \scriptsize 1.00 & \scriptsize \textbf{0.98} & \scriptsize 1.08 \\

\scriptsize Yorkshire \& Humber & \scriptsize 1.01 & \scriptsize 1.12 & \scriptsize 1.02 & \scriptsize 1.01 & \scriptsize 1.04 \\ \hline
\end{tabular}
\begin{tablenotes}
\footnotesize
\item Notes: The table reports the ratios of realised MSFEs of eDMA, DMA$_{0.99}$, DMA$_{0.95}$, BMA and DLM$_{0.99}$ relative to ADMA. Figures highlighted in bold indicate that the realised MSFE of the competing forecasting strategy is less than that of ADMA. A $\dagger$ indicates rejection of the null hypothesis of the~\cite{ClarkWest_2007}  test at the 5\% significance level.
\end{tablenotes}
\end{table*}

It is evident from Tables~\ref{ForecastAccuracy}, \ref{ADMAvsDMA} and \ref{DMAvsADMA} that
ADMA performs better than all other methods. First, it is the only method that achieves a statistically significant improvement over the benchmark in all regional markets and, second, it produces on average the most accurate forecasts with a mean MSFE 15\% lower than  the AR(1). 
The second best forecasting method is eDMA. This method generates significantly more accurate forecasts than the AR(1) model in nine regional markets and achieves a 10\% average improvement in MSFE relative to the
benchmark. 
A comparison of ADMA and eDMA suggests that ADMA is more accurate in eight of
the 13 regions, with this improvement being statistically significant in six
cases.
In contrast, eDMA performs significantly better than ADMA in only three regions.
DMA with fixed forgetting also outperforms the AR(1) benchmark in the majority
of cases but its performance depends critically on the choice of $\lambda$ and
$\alpha$, and no choice appears to be uniformly better.
ADMA generates more accurate forecasts than DMA$_{0.99}$ and DMA$_{0.95}$
in all regional property markets but one. This forecast improvement is statistically significant in 11 regions for DMA$_{0.99}$ and in nine regions for DMA$_{0.95}$.
BMA achieves a lower MSFE than ADMA in four markets but in all cases the
difference is very small, and only once it is found to be statistically
significant. In contrast, ADMA achieves a significant improvement over BMA in
five regional markets.
Finally, the performance of DLM$_{0.99}$ is uniformly worse than that of ADMA, and this model outperforms the benchmark only in six regional markets.
This outcome is consistent with~\cite{KoopKorobilis_2012}
and~\cite{BorkMoller_2015}, who argue that the use of a large number of
explanatory variables can cause model over-fitting which leads to
inaccurate predictions.

\subsection{Best House Price Predictors Over Time and Across Regions}

Having discussed forecast accuracy, we employ the estimated ADMA weights, $w_{k,t+1}$, to identify important variables for predicting future property price movements, and to
investigate how the \textit{best} house predictors vary over time and across
regional markets. Following~\cite{KoopKorobilis_2012}, for each predictor in
our dataset, we scan through the set of DLMs and select those which contain the
variable under consideration in their specification. The probability that ADMA
assigns to this subset of models, called the \textit{posterior inclusion
probability}, reflects the importance of the variable in forecasting. 

\begin{figure}[t] \centering
\includegraphics[width=1\linewidth]{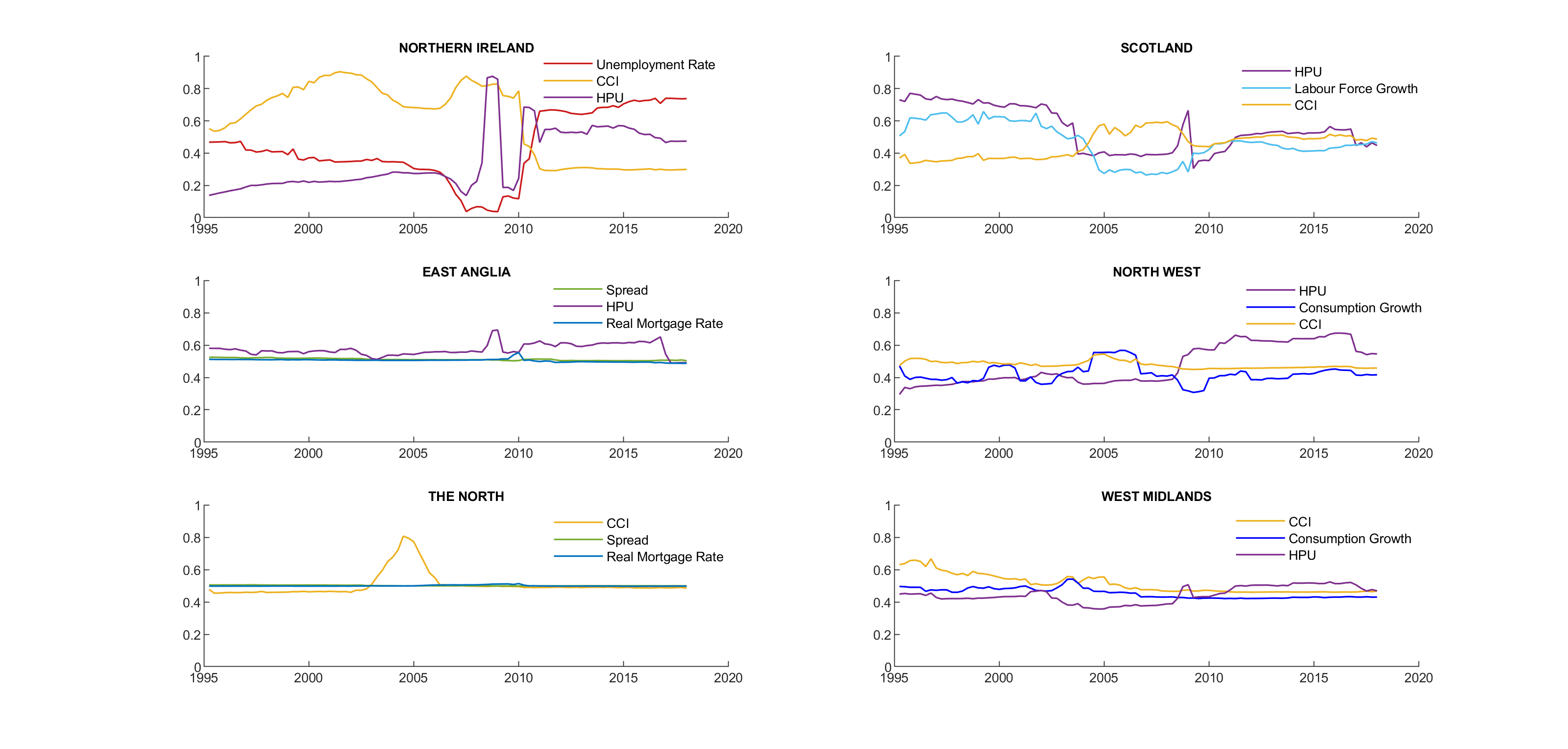}
\caption{Posterior inclusion probabilities for the three most important predictors in the three most volatile UK regional markets (Northern Ireland, East Anglia, the North) and the three most stable markets (Scotland, North West, West Midlands)}\label{REGIONS}
\end{figure}

Figure~\ref{REGIONS} displays the estimated posterior inclusion probabilities.
For presentation purposes, we report results for the three most important
predictors - classifying a predictor as important on the basis of its ADMA
weights over the entire evaluation period- and focus on the three most volatile
(Northern Ireland, East Anglia, the North) and the three most stable regional
markets (Scotland, North West, West Midlands). Overall, the results in
Figure~\ref{REGIONS} suggest that the best predictors differ over time and
across regions. 

For volatile regions, we observe that in two out of the three regions (Northern
Ireland and the North), the key house price predictor during the recent boom is
CCI. In Northern Ireland, which is the most volatile region in our sample, the
posterior inclusion probability attached to CCI is consistently high throughout
the boom phase but drops in the last part of the sample period. In the North,
the CCI posterior inclusion probability increases from around 40\% in 1995 to
80\% in 2004 and then falls back to its original level. These findings are in
line with the widely held view that changes in credit conditions were at the
heart of the house price surge prior to the financial crisis. 

On the other hand, house price uncertainty plays an important role in
predicting future house price movements in volatile markets ahead of the house
price collapse of 2008:Q3. The probability of including HPU in the forecasting
models of Northern Ireland and East Anglia rises to around 90\% and 70\%,
respectively, in 2008:Q3 and then drops following the downturn in property
prices. The mortgage rate and spread are important predictors of house price
inflation in East Anglia and the North, though their ADMA weights are only
marginally above 0.5. With regard to the other predictive variables, we note
that these are important in some volatile regions, but not in others. Perhaps
the most striking example is the unemployment rate. For Northern Ireland, this
variable is one of the key determinants of future property price movements in
the aftermath of the house price collapse, with an ADMA weight of around 0.7
from 20011:Q1 until the end of the sample. On the contrary, for East Anglia and
the North, the probability of including the unemployment rate in the predictive
model is never above 0.5.

Moving on to the stable housing markets of Scotland, North West and West
Midlands, we notice that credit availability and house price uncertainty are
again included in the set of important predictors. In all three regions, HPU
becomes the best predictor ahead of the house price collapse of 2008:Q3 and
during the bust phase. While, CCI is the key determinant of property price
inflation at the start of the boom phase, from the first quarter of 2004 until
the end of 2005. Similarly to volatile property markets, the remaining
predictors show mixed predictive ability. 
Overall the results of the empirical application suggest that allowing for
structural instability and regional heterogeneity is crucial for forecasting UK
house prices.

\section{Conclusions}\label{sec:Conc}

Dynamic model averaging (DMA) is gaining increasing attention in macroeconomic
time series forecasting due to its ability to accommodate time-variation in
both the parameters as well as the specification of the optimal forecasting
model.  In this paper we introduced a novel adaptive methodology for DMA
which aims to overcome limitations of existing DMA specifications with respect
to both the sequential estimation of the optimal forgetting factor for each
dynamic linear model (DLM), as well as the model averaging process.
Motivated by work in adaptive filtering, we proposed to optimise the forgetting
factor of each DLM through a state-of-the-art stochastic gradient descent
algorithm. Our simulation study illustrated that this approach can effectively
approximate the optimal forgetting factor 
under different types of change in the data generating process, including cases
in which the speed or type of change is variable over time. 
A further advantage of our approach is that it is computationally less
demanding compared to competing DMA specifications that sequentially update the
DLM forgetting factor by considering a grid of values for this parameter.
Our adaptive methodology also involves a parameter-free forecast aggregation
algorithm from the literature on prediction with expert advice. This allows us
to obtain finite-time performance guarantees about the forecast accuracy of the
DMA forecast. To the best of our knowledge no other DMA specification has this
property.

We conducted an in-depth empirical evaluation of the proposed methodology on
the task of forecasting UK regional house prices.  Our results indicate that
the adaptive DMA produces overall more accurate forecasts than competing DMA
specifications.  They also reveal that no single predictor is consistently
chosen as the key determinant of future property price movements. Credit
availability was found to be an important predictor of house price inflation
for several regional markets during the boom phase of the 2000s, while house
price uncertainty appeared to play an important role in predicting house price
movements on the eve of the price collapse of 2008:Q3.

\bibliographystyle{chicago}



\clearpage
\newpage
\begin{appendices}

\renewcommand{\theequation}{\thesection.\arabic{equation}}
\setcounter{equation}{0}

\section{Adaptive Forgetting DLM}\label{app:Deriv}

In this appendix we derive all the derivatives necessary to compute $\dJdl{t+1}$.
For completeness, we repeat the chain rule equation,
\begin{equation}\label{eq:ChainRule2}
\dJdl{t+1} = \frac{\partial J_{t+1}}{\partial \est_t} \dTdl{t}.
\end{equation}
First, we introduce two equations that follow directly from the definition of DLM, and which we will need in the following derivations.
Replacing the definition of the adaptive coefficient vector~$A_t$, Eq.~\eqref{eq:KG},
in Eq.~\eqref{eq:Koop9} yields the following equivalent expression for the estimator of the conditional covariance of $\theta_t$,
\begin{align}\label{eq:AltC}
C_{t} & = \lambda^{-1} C_{t-1} - A_{t} x_{t}^\top \lambda^{-1} C_{t-1}.
\end{align}
Combining the above with Eq.~\eqref{eq:KG} yields,
\begin{align}\label{eq:KG2}
A_{t} & = C_{t} x_{t} S_{t-1}^{-1}.
\end{align}

We now proceed to the derivation of $\dJdl{t}$. Obtaining an expression for the first derivative in Eq.~\eqref{eq:ChainRule} is straightforward,
\begin{equation}
\frac{\partial J_{t+1}}{\partial \est_t}  = - \he_{t+1} x_{t+1}^\top.
\end{equation}
The derivative of $\est_t$ with respect to the forgetting factor, $\dTdl{t}$, is obtained by differentiating Eq.~\eqref{eq:Koop8} with respect to $\lambda$,
\begin{align}\label{eq:dTdl}
\dTdl{t} & = \nabla_{\lambda} \left\{  \est_{t-1} + A_{t} \he_{t} \right\}  \nonumber \\
& = \nabla_{\lambda} \left\{ \est_{t-1} + C_{t} x_{t} S_{t-1}^{-1} \he_{t} \right\} \nonumber \\
& = \dTdl{t-1} + \dCdl{t} x_{t} S_{t-1}^{-1} \he_{t} - C_{t} x_{t} S_{t-1}^{-2} \dSdl{t-1}\he_{t} - C_{t} x_{t} S_{t-1}^{-1} x_t^\top \dTdl{t-1} \nonumber \\
& = (I - C_{t} x_{t} S_{t-1}^{-1} x_t^\top) \dTdl{t-1} + \dCdl{t} x_{t} S_{t-1}^{-1} \he_{t} - C_{t} x_{t} S_{t-1}^{-2} \dSdl{t-1} \he_{t} \nonumber \\
& = (I - A_{t-1}x_t^\top) \dTdl{t-1} + \frac{\he_t}{S_{t-1}} \left(\dCdl{t} x_{t} - A_{t-1} \dSdl{t-1} \right),
\end{align}
where $\dCdl{t}$ is the derivative of $C_t$, the estimator of the covariance matrix of $\theta_t$, with respect to $\lambda$, and $\dSdl{t-1}$ is the derivative of the point estimate of the observational variance at time $t-1$, $S_{t-1}$ with respect to~$\lambda$.
In the first and last steps of the above derivation we used Eq.~\eqref{eq:KG2}.

We next need to derive expressions for $\dCdl{t}$ and $\dSdl{t}$.
To this end we will need the derivative of the one-step-ahead predictive variance $Q_t$ with respect to
$\lambda$, which we denote as $\dQdl{t}$. By differentiating Eq.~\eqref{eq:Q} we get,
\begin{align*}
\dQdl{t} = \lambda^{-1} x_t^\top \dCdl{t-1} x_t - \lambda^{-2} x_t^\top C_{t-1} x_t + \dSdl{t-1}.
\end{align*}
We are now able to get an expression for $\dSdl{t}$ by differentiating Eq.~\eqref{eq:PW_CN},
\begin{align}
\dSdl{t} = &\dSdl{t-1} + \frac{1}{n_t}\dSdl{t-1}  \left( \frac{\he_t^2}{Q_{t}} - 1 \right)  \nonumber \\
& - \frac{S_{t-1}}{n_t} \left( \frac{2 \he_t x_t^\top \dTdl{t-1} }{Q_t}  +\frac{ \he_t^2 }{Q_t^2}\dQdl{t} \right).
\end{align}

\noindent
Next we compute the derivative $\dCdl{t}$ by differentiating Eq.~\eqref{eq:AltC},
\begin{align}\label{eq:dCdl}
\dCdl{t} = & \lambda^{-1} \dCdl{t-1} -\lambda^{-2} C_{t-1} - \dAdl{t} x_t^\top \lambda^{-1} C_{t-1} - \nonumber \\
& -  A_t x_t^\top  \lambda^{-1} \dCdl{t-1} + A_t x_t^\top \lambda^{-2} C_{t-1},
\end{align}
where $\dAdl{t}$ denotes the derivative of the adaptive coefficient vector at time $t$ with respect
to $\lambda$. This derivative is obtained by differentiating Eq.~\eqref{eq:KG} after expanding
the term $Q_t$ that appears in the denominator by its definition in Eq.~\eqref{eq:Q}.
In particular,

\begin{align} \label{eq:dAdl}
\dAdl{t} & =  \frac{ \lambda^{-1} \dCdl{t-1} x_t}{Q_t} - \frac{\lambda^{-2} C_{t-1} x_{t} }{ Q_t } 
- \frac{\lambda^{-1}C_{t-1} x_{t} \dQdl{t}}{Q_t^2} \nonumber\\
& =  \frac{1}{Q_t} \Bigg\{ \lambda^{-1} \dCdl{t-1} x_t - \lambda^{-2} C_{t-1} x_{t} \nonumber \\
&\quad - A_t \left(\lambda^{-1} x_t^\top \dCdl{t-1} x_t - \lambda^{-2} x_t^\top C_{t-1} x_t + \dSdl{t-1}\right) \Bigg\} \nonumber \\
& = \frac{1}{\lambda Q_t} \dCdl{t-1} x_t - A_t \left( \frac{1}{\lambda} + \frac{1}{Q_t} \dQdl{t} \right).
\end{align}
Substituting the above expression for $\dAdl{t}$ into the derivative $\dCdl{t}$ in Eq.~\eqref{eq:dCdl},

\begin{align} 
\dCdl{t} & = \lambda^{-1} \dCdl{t-1} -\lambda^{-2} C_t - \left(\lambda^{-1} \dCdl{t-1} x_t - \lambda^{-2} C_{t-1} x_{t} \right) \frac{\lambda^{-1} x_t^\top C_{t-1}}{ Q_t } \nonumber \\
&\quad + A_t \left(\lambda^{-1} x_t^\top \dCdl{t-1} x_t - \lambda^{-2} x_t^\top C_{t-1} x_t + \dSdl{t-1}\right) 
\frac{\lambda^{-1} x_t^\top C_{t-1}}{ Q_t } \nonumber \\
&\quad -  A_t x_t^\top  \lambda^{-1} \dCdl{t-1} + A_t x_t^\top \lambda^{-2} C_{t-1} \nonumber \\
& =  \lambda^{-1} \dCdl{t-1} -\lambda^{-2} C_t - \left(\lambda^{-1} \dCdl{t-1} x_t - \lambda^{-2} C_{t-1} x_{t} \right) A_t^\top \nonumber \\
&\quad + A_t \left(\lambda^{-1} x_t^\top \dCdl{t-1} x_t - \lambda^{-2} x_t^\top C_{t-1} x_t + \dSdl{t-1}\right) A_t^\top \nonumber \\
&\quad -  A_t x_t^\top  \lambda^{-1} \dCdl{t-1} + A_t x_t^\top \lambda^{-2} C_{t-1}.
\end{align}

\noindent
Substituting Eqs.~\eqref{eq:AltC} and~\eqref{eq:KG2} in the above equation and collecting terms yields,
\begin{align}\label{eq:dCdl2}
\dCdl{t} & = \lambda^{-1} \left(I - A_t x_t^\top \right) \dCdl{t-1}\left(I - x_t A_t^\top \right)  \nonumber \\
&\quad + A_t \dSdl{t-1} A_t^\top + \lambda^{-1} C_{t} + \lambda^{-1} C_t x_t A_t^\top \nonumber \\
& = \lambda^{-1} \left(I - A_t x_t^\top \right) \dCdl{t-1}\left(I - x_t A_t^\top \right)  \nonumber \\
&\quad + A_t \dSdl{t-1} A_t^\top + \lambda^{-1} C_{t} + \lambda^{-1} A_t S_{t-1} A_t^\top.
\end{align}

The above derivation illustrates that the derivatives of all the involved
quantities can be expressed as functions of quantities involved in the update equations of the DLM, and $\dCdl{t}$ and $\dSdl{t}$, which can be estimated recursively.
We are now in position to state the adaptive forgetting DLM algorithm.
Algorithm~\ref{alg:adlm} contains a detailed description,
including the ADAM stochastic gradient descent algorithm which is used to
update~$\lambda$ in consecutive iterations. 
Although the algorithm is expressed in terms of quantities at time $t$ and $t-1$, the computations are ordered in a manner that allows us to perform all the updates by retaining a single copy of each quantity. In other words, there is no need to retain lagged values for any of the variables involved. Also note that the update equations for $\dTdl{t}$ and $\dCdl{t}$ in Algorithm~\ref{alg:adlm} are obtained by differentiating the update equations for~$\est_t$ and~$C_t$, respectively.

The first two inputs, $\{x_t\}_{t=1}^T$ and $\{y_t\}_{t=1}^T$, correspond to the time series of the covariates and response, respectively.
The third input parameter, $g$, determines the variance of the prior distribution for $\theta$,
see Eq.~\eqref{eq:C0}. We follow~\cite{CataniaNonejad_2018} and use as
default value~$g=100$.
The next three inputs $\lambda_0, \lambda^+, \lambda_{-}$, correspond to the
initial, the maximum and the minimum value of the forgetting factor. The default
values for these parameters are 0.99, 0.999 and 0.9, respecively
It is worth noting that bounds on the upper and lower values of the forgetting
factor are employed by all algorithms that perform online tuning of this
parameter~\citep{Haykin_2002,PavlidisTAH2011,Anagn12}.
Finally, the last three parameters are employed by the ADAM stochastic gradient descent algorithm.
The first, $\gamma$, determines the maximum step-size, while $\beta_1$ and $\beta_2$ control
the exponential decay rates for the moving average estimates
of the mean $m_t/(1-\beta_1^t)$ and variance $v_t/(1-\beta_2^t)$ of the estimated gradient $\dTdl{}$ over time. 
\begin{algorithm}[H]
\caption{Adaptive Forgetting DLM}\label{alg:adlm}
\begin{algorithmic}[1]
\REQUIRE $\{x_t\}_{t=1}^T, \{y_t\}_{t=1}^T,g=100,\lambda_1=0.99, \lambda^+ = 0.999, \lambda_{-} = 0.9, 
	\gamma =5 \cdot 10^{-3},\beta_1=0.8, \beta_2=0.8$ 

\STATE $\est_{0} = 0$

\STATE $\he_1 = y_1 $

\STATE $C_1 = g I$
\STATE $\dCdl{1} = 0$

\STATE $Q_1 = x_1^\top C_1 x_1$

\STATE $A_1 = C_1 x_1/Q_1$

\STATE $\est_1 = \est_0 + \he_1 A_1$
\STATE $\dTdl{1}= 0$

\STATE $S_1 = \frac{1}{2}\left(y_1^2 + \he_1^2/Q_1 \right)$	
\STATE $\dSdl{1} = 0$	

\STATE $n_1 = 2$

\STATE $m_1 = 0$ \COMMENT{Initialise ADAM Parameters} 
\STATE $v_1 = 0$

\FOR{ $t = 2,\ldots, T$}

\STATE $n_t = n_{t-1} + 1$

\STATE $\hat{y}_t = x_t^\top \est_{t-1}$ \COMMENT{Prediction}

\STATE $Q_t = \lambda_{t-1}^{-1} x_{t}^\top C_{t-1} x_{t} + S_{t-1}$ \COMMENT{Prediction variance}

\STATE $\dQdl{t} = \lambda_{t-1}^{-1} x_t^\top \dCdl{t-1} x_t - \lambda_{t-1}^{-2} x_t^\top C_{t-1} x_t + \dSdl{t-1}$

\STATE $\he_t = y_t - \hat{y}_t$ \COMMENT{Prediction error}

\STATE $\dJdl{t} = -\he_{t} x_{t}^\top \dTdl{t-1}$ \COMMENT{Derivarive of forecast error w.r.t $\lambda$}

\STATE $A_{t} =  \lambda_{t-1}^{-1} C_{t-1} x_{t}/Q_{t}$ \COMMENT{Adaptive coefficient vector }

\STATE $\dAdl{t} = \lambda_{t-1}^{-1} Q_t^{-1} \dCdl{t-1} x_t - A_t \left( \lambda_{t-1}^{-1} + \dQdl{t} Q_t^{-1}\right)$

\STATE $\dSdl{t} = \dSdl{t-1} + \frac{1}{n_t Q_t} \left(\dSdl{t-1} \left( \he_t^2 - Q_t \right) - S_{t-1}
\left(2\he_t x_t^\top \dTdl{t-1} + \frac{\he_t^2}{Q_t} \dQdl{t} \right) \right)$

\STATE $S_t = S_{t-1} + n_t^{-1} S_{t-1} (\he^2 Q_t^{-1} -1)$ \COMMENT{Observational noise}

\STATE $\est_t = \est_{t-1} + A_t \he_t$ \COMMENT{Posterior coefficient estimate}

\STATE $\dTdl{t} = \dTdl{t-1} + \he_t \dAdl{t} - x_t^\top \dTdl{t-1} A_t$

\STATE $\dCdl{t} = (I - A_t x_t^\top) \lambda_{t-1}^{-1} \dCdl{t-1} - 
	(I + \lambda_{t-1} \dAdl{t}x_t^\top - A_t x_t^\top)\lambda_{t-1}^{-2} C_{t-1}$

\STATE $C_{t} = \lambda_{t-1}^{-1} C_{t-1} - A_t A_t^\top Q_t$ \COMMENT{Coefficient covariance matrix}
\algstore{testcont}
\end{algorithmic}
\end{algorithm}

\begin{algorithm}[H]                     
\begin{algorithmic}[1]                 
\algrestore{testcont}
\STATE \tt{\# ADAM Stochastic Gradient Descent ($\epsilon = 10^{-8}$)}

\STATE $m_t = b_1 m_{t-1} + (1-b_1) \dJdl{t}$

\STATE $v_t = b_2 v_{t-1} + (1-b_2) \left(\dJdl{t}\right)^2$

\STATE $\lambda_t = \left[ \lambda_{t-1} - \gamma m_t \left((1-\beta_1^t) \left(\sqrt{v_t/(1-\beta_2^t)} + \epsilon\right) \right)^{-1} \right]_{\lambda^{-}}^{\lambda^{+}}$ 

\ENDFOR

\end{algorithmic}
\end{algorithm}

\newpage
\section{Variable Definitions and Data Sources} \label{appendix:a}

\textbf{House Prices} Real regional house price index (all houses, seasonally adjusted). Source: Nationwide.
\newline
\textbf{Income} Real average total household's weekly expenditure. Source: Family Expenditure Survey (FES). 
\newline
\textbf{Price-to-Income Ratio} Quarterly changes in the log of the ratio of house prices to income.
\newline
\textbf{Income Growth}~Annualised quarterly changes in the log of real income.
\newline
\textbf{Labour Force Growth}~Annualized quarterly changes in the log of the sum of unemployed and employed people. Source: Labour Force Survey (LFS).
\newline
\textbf{Unemployment Rate}~Quarterly changes in the ratio of unemployed people to the labour force times 100. Source: LFS.  
\newline
\textbf{Real Mortgage Rate}~Quarterly changes in the real mortgage rate of building societies, adjusted for the cost of mortgage tax relief as in \citet{Muellbauer_2006}. Sources: OECD Main Economic Indicators and HM Revenue \& Customs.
\newline
\textbf{Spread}~Difference between the 10-year government bond yield and the rate of discount on 3-month treasury bills. Sources: Saint Louis FRED Economic Data and the Bank of England.
\newline
\textbf{Industrial Production Growth}~Annualised quarterly changes in the log of total industrial production of all industries (seasonally adjusted). Source: Office for National Statistics (ONS).
\newline
\textbf{Real Consumption Growth}~Annualised quarterly changes in the log of real final consumption expenditure of households and non-profit institutions serving households (seasonally adjusted, millions of UK sterling pounds). Source: ONS. 
\newline
\textbf{Housing Starts}~Log of the number of all permanent dwellings started in the UK. Source: Department for Communities and Local Government.
\newline
\textbf{Index of Credit Conditions}~Designed as a linear spline function, this index is estimated using a two-equation system of secured and unsecured lending. For details about the methodology and sources of the data used in the estimation please refer to the supplementary Appendix to \cite{PavlidisPPY2017}. 
\newline
\textbf{House Price Uncertainty Index}~Constructed using the methodology outlined in \citet{BakerScott_2016} to proxy for economic policy uncertainty. The HPU is an index of search results from five large newspapers in the UK: The Guardian, The Independent, The Times, Financial Times and Daily Mail. We use LexisNexis digital archives of these newspapers to obtain a quarterly count of articles that contain the following three terms: \bsq{uncertainty} or \bsq{uncertain}; \bsq{housing} or \bsq{house prices} or \bsq{real estate}; and one of the following: \bsq{policy}, \bsq{regulation}, \bsq{Bank of England}, \bsq{mortgage}, \bsq{interest rate}, \bsq{stamp-duty}, \bsq{tax}, \bsq{bubble} or \bsq{buy-to-let} (including variants like \bsq{uncertainties}, \bsq{housing market} or \bsq{regulatory}). To meet the search criteria an article must contain terms in all three categories. The resulting series of search counts is then scaled by the total number of articles in the given newspaper and in the given quarter. Finally, to obtain the HPU index, we average across the five newspapers by quarter and normalise the index to a mean of 100.

\end{appendices}

\bibliographystyle{chicago}

\end{document}